\documentclass[sts]{imsart}

\pdfoutput=1

\RequirePackage{amsthm,amsmath,amsfonts,amssymb}
\RequirePackage[authoryear,sort&compress]{natbib}
\RequirePackage{graphicx}
\graphicspath{{figures/}}
\RequirePackage{booktabs}
\RequirePackage{array}
\RequirePackage{url}
\RequirePackage{xcolor}

\InputIfFileExists{preprint.cfg}{}{}
\makeatletter
\ifdefined\PREPRINT\def\journal@name{}\fi
\makeatother

\startlocaldefs
\newcommand{\R}{\mathbb{R}}
\newcommand{\Sig}{\Sigma}
\newcommand{\F}{\mathcal{F}}
\newcommand{\B}{\mathcal{B}}
\newcommand{\Sset}{\mathcal{S}}

\newcommand{\trans}{^{\top}}
\newcommand{\lb}{\ell}
\newcommand{\ub}{u}

\theoremstyle{plain}
\newtheorem{theorem}{Theorem}
\newtheorem{proposition}{Proposition}
\newtheorem{corollary}{Corollary}
\newtheorem{lemma}{Lemma}
\theoremstyle{definition}
\newtheorem{remark}{Remark}
\endlocaldefs

\begin{document}

% Number sections and subsections only; \paragraph headers stay unnumbered
% run-in.
\setcounter{secnumdepth}{2}

\begin{frontmatter}
\runtitle{One Curve, Two Literatures}
\runauthor{T. Schmelzer and T. Hastie}

\begin{aug}
\author[A]{\fnms{Thomas}~\snm{Schmelzer}\ead[label=e1]{thomas@jqr.ae}\orcid{0009-0009-9771-3501}}
\author[B]{\fnms{Trevor}~\snm{Hastie}\ead[label=e2]{hastie@stanford.edu}\orcid{0000-0002-0164-3142}}

\address[A]{Jebel Quant Research, Abu Dhabi\printead[presep={\ }]{e1}.}
\address[B]{Department of Statistics, Stanford University\printead[presep={\ }]{e2}.}
\end{aug}

\title{The Critical Line Algorithm and the Constrained LASSO:\\ One Curve, Two Literatures\\[8pt]
{\normalsize\spaceskip=0pt\relax\mbox{Working note, circulated for comment}}}

\begin{abstract}

Many statistical procedures compute an entire solution path rather than a single
estimator. For one class the path is piecewise linear, traced corner to corner by an
active-set homotopy. Two of its members coincide exactly: under $\Sigma=X^\top X$ and
$\mu=X^\top y$ the gross-exposure-con\-strained mean--variance \emph{program} and the
constrained LASSO trace the same piecewise-linear curve, and they keep doing so under
arbitrary linear equality and inequality constraints. Which parameter is swept, a leverage
budget or a return tilt, is not a choice the curve notices: under a homogeneous mandate
the two sweeps differ by a scalar. A mandate that holds the portfolio invested costs a
radial rescaling instead, and the efficient frontier is that curve rescaled; its corners
pass unnoticed because the frontier is continuously differentiable where they fall. We
give the map between the two parametrisations and the single way it degenerates.
That mean--variance selection and the LASSO instantiate one parametric quadratic program
is prior art (G\"artner, Jaggi and Maria, 2012). The Critical Line Algorithm itself lies
off that route, and the correspondence established here is new.

The identity is one of curves, not of statistical experiments. The map between the
parametrisations is computed from the data, so the geometry of the path transfers while
quantities averaged over the response do not. We cross that boundary deliberately and
give the degrees of freedom of the constrained fit under arbitrary linear constraints. In
the long-only, fully invested case it is the expected number of holdings away from a
bound, less one.

\end{abstract}

\begin{keyword}
\kwd{active-set methods}
\kwd{Critical Line Algorithm}
\kwd{homotopy methods}
\kwd{LASSO}
\kwd{parametric quadratic programming}
\kwd{regularization paths}
\end{keyword}

\begin{keyword}[class=MSC2020]
\kwd[Primary ]{62J07}
\kwd[; secondary ]{90C20}
\kwd{91G10}
\end{keyword}

\end{frontmatter}

% Body sections, one file each, in reading order (see sections/).

\section{Introduction}

\begin{figure}[t]
\centering
\includegraphics[width=0.98\columnwidth]{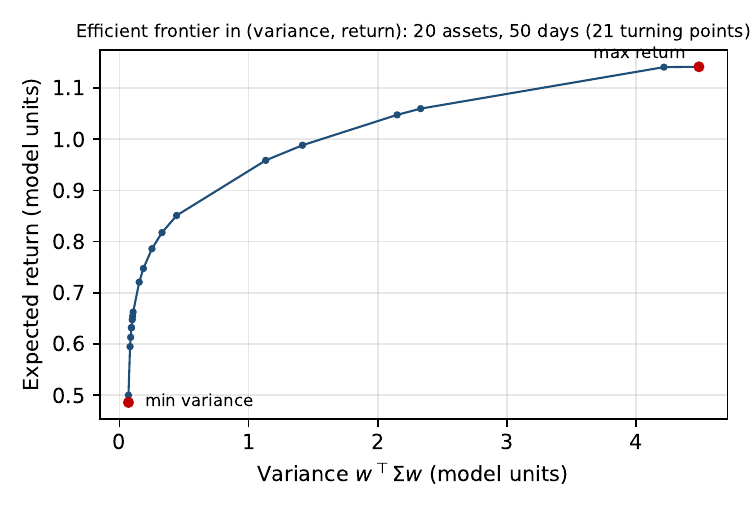}
\caption{Markowitz's efficient frontier of a $20$-asset, $50$-day factor-model problem,
computed exactly by the Critical Line Algorithm as $21$ turning points. We plot expected
return against \emph{variance} $w^\top\Sigma w$, not volatility: the variance is the
quadratic form of the program itself, which under Theorem~\ref{thm:identity} is the
LASSO's residual sum of squares. The corners are the portfolio reading of the
breakpoints that, on the regression side, form the LASSO path, after the rescaling of
Proposition~\ref{prop:bridge} that carries this long-only frontier onto the nonnegative
LASSO path and back.}
\label{fig:frontier}
\end{figure}

A statistician plots a LASSO path and sees kinks: the coefficients bend where the active
set changes, and the corners are the first thing on the page. A portfolio manager plots
the efficient frontier and sees a smooth arc. The corners are in that picture too (the
same active-set changes, in the same places), but they are invisible, because the
frontier meets itself with continuous slope at every one of them and only its curvature
breaks. Two conventions, one curve, and only one of them shows the
mechanism. Under $\Sigma=X^\top X$ and $\mu=X^\top y$ the two are not analogous
computations at all. They are the same computation, and this note is about what follows
from taking that seriously.

Much of modern statistical methodology is organised around solution paths: we rarely want
a single fit, but the whole family of fits as a tuning parameter varies. It is the path,
not any one point on it, that carries the information for model selection and for seeing
how a method trades fit against complexity. For an important class of these problems the
path is not something one approximates by re-fitting on a grid. It is \emph{exactly}
piecewise linear, and the whole of it can be traced, corner to corner, by a finite
algorithm that maintains nothing but an active set.

That algorithm is not native to statistics. The same procedure computes the efficient
frontier of portfolio theory, the explicit solution of a constrained control law and the
regularisation paths of statistics and machine learning, and it was worked out in
optimization and finance in the 1950s. That these are one method and not several is an
observation others have made \citep{gartner2012, mairal2012}; Remark~\ref{rem:precedents}
sets out what is prior art and what is ours. But all of them cite
\citet{markowitz1952} and none cites \citet{markowitz1956}, and the difference matters.
The 1952 paper, written while its author was a graduate student and the work his Nobel
Prize would rest on, carries almost no machinery. What it carries is the idea: that risk
and return are two objectives and not one, that the portfolios worth holding are the
undominated ones, and that they form a frontier, the first efficient frontier anyone
drew. The means of tracing it came four years later, in the \emph{Naval Research
Logistics Quarterly}. The celebrated paper posed the problem; the obscure one gave the method.

It is the 1956 paper this note is about: the \emph{procedure}, not the frontier, which is
by now common property. Writing a fast implementation of it \citep{cvxcla}, we were
struck by how closely its inner loop reproduced the LARS/LASSO homotopy statistics
arrived at four decades later \citep{efron2004}: the same active set, the same
piecewise-linear path, the same ratio test choosing the next breakpoint, with only the
names of the quantities differing. Pursued, the resemblance is an exact identity.

A reader who knows sparse portfolio selection will object at once that the connection is
made already. \citet{brodie2009} recognised the Markowitz problem under a cap on gross
exposure \emph{as} a LASSO, and \citet{gaines2018} traced the constrained-LASSO path
under general linear constraints. Both are prior art and we claim no part of either. But
\citet{brodie2009} swept a leverage penalty at fixed return rather than tracing the
risk--return frontier, so their curve meets the frontier only at the zero-penalty end,
and \citet{gaines2018} never read the constrained path as a portfolio. Neither pairs the
two \emph{procedures}, and it is the pairing that has consequences.

What is new here, then, is not that these procedures share a framework, which is also
prior art (Remark~\ref{rem:precedents}), but what follows from taking the pairing
seriously: that Markowitz's Critical Line Algorithm is an instance of the same active-set
homotopy and that its path coincides with the constrained LASSO path, under arbitrary
linear equality and inequality constraints; the map $c(\lambda)=\|\beta(\lambda)\|_1$
between their parameters, the one-to-one correspondence of breakpoints, and the single
way it degenerates; that sweeping a return tilt at fixed leverage, which is the sweep a
mandate actually asks for, is the budget sweep rescaled, and the heavier rescaling that
carries the fully invested frontier onto the nonnegative LASSO path; the boundary past
which the identity stops, since $c(\lambda)$ is
computed from the data; and the degrees of freedom of the constrained fit, which in the
long-only, fully invested case is the expected number of holdings away from a bound, less
one.

Section~\ref{sec:scheme} gives the common scheme and its events,
Section~\ref{sec:identity} the identity, Section~\ref{sec:inference} what it does not
carry and the one crossing we make of that boundary, Section~\ref{sec:history} the
history, and Section~\ref{sec:utility} what the exact path shows.

\section{The common scheme}
\label{sec:scheme}

The two programs this note pairs are easy to write down. On the regression side, least
squares with an $\ell_1$ penalty and linear constraints,
$\min_\beta\tfrac12\|y-X\beta\|_2^2+\lambda\|\beta\|_1$ subject to $A\beta=b$ and
$G\beta\le h$; on the portfolio side, mean--variance selection tilted towards return,
$\min_w\tfrac12 w\trans\Sig w-\lambda\mu\trans w$ subject to a budget, a box, and
whatever else the mandate imposes. Section~\ref{sec:identity} states both precisely and
shows they trace one curve. Strip them to what they share and the common form is
immediate: both minimise a convex quadratic over a polyhedron, both carry an $\ell_1$
term, and in both a single scalar moves. Written once:
\begin{equation}
\label{eq:pqp}
\begin{aligned}
&x(\lambda) = \arg\min_{x\in\R^n}\ \tfrac12\, x\trans H x - c(\lambda)\trans x
   + \tau(\lambda)\,\|x\|_1 \\
&\quad\text{s.t.}\ \ A x = b,\ \ G x \le h,\ \ \lb \le x \le \ub ,
\end{aligned}
\end{equation}
with $H$, $A$ and $G$ fixed, the constraint data $b,h,\lb,\ub$ fixed for now, and the two
moving parts affine in the parameter: $c(\lambda)=q+\lambda d$ splits into a piece that
stays and a piece that moves, and the penalty weight
$\tau(\lambda)=\tau_0+\lambda\tau_1$ is a datum like the rest, required only to stay
nonnegative on the interval traced. The constrained LASSO takes $\tau(\lambda)=\lambda$
with $q=X\trans y$ and $d=0$; the frontier takes $\tau\equiv0$ and the tilt $d=\mu$.
Nothing else distinguishes them here, and nothing below depends on where in
\eqref{eq:pqp} the parameter sits, only that it enters affinely; that alone makes the
right-hand side of the bordered system \eqref{eq:border} affine, and hence the segment.

A penalty is not a constraint in disguise, which is why \eqref{eq:pqp} carries both. The
ball $\|x\|_1\le c$ is a polyhedron with $2^n$ facets and is not among the rows of $A$,
$G$ or the bounds; a template meant to contain both forms must carry both. What
Section~\ref{sec:identity} shows is the different and sharper fact that the two
\emph{paths} coincide.

\paragraph{Letting the constraint data move} Everything below holds verbatim if
$b,h,\lb,\ub$ are themselves affine in $\lambda$, and two members need that. The
rescaled frontier of Proposition~\ref{prop:bridge} moves its budget row and its box
together; the support-vector-machine path of \citet{hastie2004svm} carries the cost $C$
as the upper bound of a box $0\le\alpha\le C$ and nowhere else. The cost is nil: blocked
coordinates $x_\B$ are then affine in $\lambda$ rather than constant, so the right-hand
side of \eqref{eq:border} is affine either way and Lemma~\ref{lem:affine} is untouched.
A moving box is to $\mathrm{P}_1$ exactly what the $\ell_1$ subgradient's travelling
threshold is to $\mathrm{D}_1$ (Section~\ref{sec:events}): the same ratio test, against
a boundary in motion. In that generality the rule (fixed matrices, every other datum
affine) goes back to \citet{ritter1962}, and \citet{gartner2012} state it in exactly
this form. The generalized lasso replaces $\|x\|_1$ by $\|Dx\|_1$ and runs the same
argument with $D$ in place of the identity, the rows pinned at $(Dx)_i=0$ joining $C$
rather than collapsing onto blocked coordinates.

\paragraph{The standing assumption} We require only that $H\succeq0$ and that the
\emph{reduced} Hessian $H_{\F\F}$ on the free block be positive definite at every
active set the path visits. This is weaker than the textbook $H\succ0$, and
deliberately so: it is exactly what the regression instance needs. There
$H=X\trans X$ is singular whenever the design has more columns than rows (more
predictors than observations, the very regime the LASSO is built for), yet
the free block $X_\F\trans X_\F$ is positive definite as soon as the active columns are
linearly independent, which holds in general position
(Remark~\ref{rem:unique}). The finance instance has $H=\Sig\succ0$ whenever the covariance is estimated from
more observations than assets, or carries an idiosyncratic floor, which is the usual
case and makes the condition automatic; a rank-deficient $\Sig$ puts finance in the same
position as a regression with more predictors than observations, and the reduced-Hessian condition is then what carries
both. Under it the bordered solve below is well-posed, which is
all the homotopy ever touches; a globally positive-definite $H$ is never needed. It
is, however, an assumption and not a formality, and Remark~\ref{rem:rank} says where it
bites.

\begin{remark}[Rank deficiency, and the limits of the standing assumption]
\label{rem:rank}
The assumption concentrates the whole difficulty of the rank-deficient case into one
line, and we flag rather than dissolve it. When $H_{\F\F}$ is singular at some visited
active set the bordered system \eqref{eq:border} no longer determines the segment, and
what the path even \emph{is} becomes a question of which minimiser one selects. The
existing treatments answer that question in incompatible ways, and none of them is the
common scheme run with weaker hypotheses; Section~\ref{sec:conclusion} sets out how they
divide and what it leaves open.
\end{remark}

\begin{table}[t]
\centering
\footnotesize
\caption{Notation. Roman capitals are fixed data matrices; calligraphic capitals are
index sets that change with the active set. Subscripting by a set restricts to those
rows, columns or coordinates.}
\label{tab:notation}
\begin{tabular}{@{}>{\raggedright\arraybackslash}p{0.18\columnwidth}>{\raggedright\arraybackslash}p{0.72\columnwidth}@{}}
\toprule
Symbol & Meaning \\
\midrule
$H$ & the quadratic form; $\Sig$ in finance, $X\trans X$ in regression \\[2pt]
$A$, $G$ & the fixed equality and inequality matrices of \eqref{eq:pqp} \\[2pt]
$\tau(\lambda)$ & the penalty weight, affine in $\lambda$; $\tau\equiv0$ for the
constraint-driven members \\[2pt]
$\B$ & blocked coordinates, pinned at a bound or at zero \\[2pt]
$\F=\B^{c}$ & the free coordinates; the \emph{support} when the penalty pins the rest \\[2pt]
$\Sset$ & the active inequality rows of $G$ \\[2pt]
$C_\F$, $r(\lambda)$ & the active rows on the free block, and their right-hand side \\[2pt]
$s_\F$ & the signs of the free coordinates \\[2pt]
$\zeta$, $\rho$ & multipliers for the active rows and for the blocked coordinates \\
\bottomrule
\end{tabular}
\end{table}

\begin{lemma}[Affine segments and the bordered solve]
\label{lem:affine}
On any maximal interval of $\lambda$ on which the active set is constant, the optimiser
of \eqref{eq:pqp} is affine,
\begin{equation}
\label{eq:affine}
x(\lambda) = \alpha + \lambda\,\delta ,
\end{equation}
and $(\alpha,\delta)$ come from a single bordered linear solve \eqref{eq:border} whose
matrix is constant on the interval. An active inequality row enters that solve
identically to an equality row.
\end{lemma}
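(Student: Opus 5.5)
Fix a maximal interval $I$ on which the active set is constant: the blocked coordinates $\B$ (each pinned at a bound $\lb_i$, $\ub_i$ or at zero), the free coordinates $\F=\B^c$ with fixed signs $s_\F$, and the active inequality rows $\Sset$ of $G$. The plan is to write the KKT conditions of \eqref{eq:pqp}, notice that on $I$ they become a linear system, and read the affine form off that system.

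\textbf{Step 1: KKT conditions on the free block.} On $\F$ each $x_i$ is nonzero with sign $s_i$ and lies strictly inside its box. So $\|x\|_1$ is differentiable there, with gradient $s_\F$, and no bound multiplier acts on it. Stationarity restricted to $\F$ reads
\[
H_{\F\F}x_\F + H_{\F\B}x_\B - c_\F(\lambda) + \tau(\lambda)s_\F + C_\F\trans\zeta = 0 .
\]
Here $C$ stacks the rows of $A$ and the rows $G_\Sset$. Feasibility gives $C_\F x_\F = r(\lambda)$, where
\[
r(\lambda)=\begin{pmatrix} b\\ h_\Sset\end{pmatrix}-C_\B x_\B .
\]
An active inequality row holds with equality, so in this system it is indistinguishable from an equality row. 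Its sign condition $\zeta_j\ge0$ does not enter the solve; it only decides where $I$ ends. This proves the last sentence of the lemma.

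\textbf{Step 2: everything is affine in $\lambda$.} The blocked values $x_\B$ are constant, or affine if the constraint data move as in the paragraph above. The data $c(\lambda)=q+\lambda d$ and $\tau(\lambda)=\tau_0+\lambda\tau_1$ are affine, and $s_\F$ is fixed on $I$. Collecting terms gives the bordered system
\begin{equation}
\label{eq:border}
\begin{pmatrix} H_{\F\F} & C_\F\trans\\ C_\F & 0\end{pmatrix}
\begin{pmatrix} x_\F\\ \zeta\end{pmatrix}
=\begin{pmatrix} c_\F(\lambda)-\tau(\lambda)s_\F-H_{\F\B}x_\B\\ r(\lambda)\end{pmatrix}.
\end{equation}
Its matrix depends only on the active set, so it is constant on $I$, and its right-hand side is affine in $\lambda$. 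Write the right-hand side as $u+\lambda v$ and solve once with two right-hand sides, $u$ and $v$. This gives $x_\F=\alpha_\F+\lambda\delta_\F$, with $\zeta$ affine as well. Setting $\alpha_\B=x_\B$ and $\delta_\B=0$ (or the slope of a moving bound) yields \eqref{eq:affine}.

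\textbf{Step 3, the main obstacle: the bordered matrix is nonsingular.} Under the standing assumption $H_{\F\F}\succ0$, this holds exactly when $C_\F$ has full row rank. To see it, suppose $(p,z)$ lies in the kernel. Then $H_{\F\F}p + C_\F\trans z=0$ and $C_\F p=0$, so
\[
p\trans H_{\F\F}p = -p\trans C_\F\trans z = -(C_\F p)\trans z = 0 .
\]
Hence $p=0$, and then $C_\F\trans z=0$ forces $z=0$ by full row rank.

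Full row rank is the delicate point. Degenerate active sets, with redundant active rows, would leave $\zeta$ non-unique. I would handle this as the maximality convention does: keep in $\Sset$ only a linearly independent set of active rows. A dependent row is implied by the others on $\F$, so dropping it leaves the primal solution unchanged. With $x_\F$ then uniquely determined, the KKT point built on $I$ satisfies all optimality conditions: the multiplier signs and the sign conditions on the blocked subgradients hold by the definition of $I$. Convexity then makes it the optimiser, and uniqueness on the free block comes from the reduced Hessian.
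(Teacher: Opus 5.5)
Your argument is correct and is the paper's own. You read stationarity and active-row feasibility on the free block with $x_\B$ held fixed, obtain the bordered system \eqref{eq:border} (matrix fixed by $H$ and the active normals, right-hand side affine in $\lambda$), and get nonsingularity from $H_{\F\F}\succ0$ plus full row rank of $C_\F$. The paper takes that rank from the constraint qualification (Appendix~\ref{sec:ratios}), while you verify nonsingularity by the kernel argument and handle dependent rows, which the paper simply assumes away. After dropping such rows the recomputed multipliers need not satisfy $\zeta_\Sset\ge0$, so your closing sufficiency remark overstates slightly, but the lemma needs only necessity (the optimiser solves the reduced system, which fixes $x_\F$ uniquely), so nothing is lost.
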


The lemma is classical. \citet{ritter1962}, translated as \citet{ritter1967}, partitions
the parameter interval into finitely many subintervals on each of which the optimum
solves a parameter-dependent system of equations, linear for a quadratic objective under
linear constraints; the active set changes at each interval end by adding, deleting, or
exchanging a constraint. We set the system up rather than reprove it, because the
notation is the one the rest of the note uses. At a given
$\lambda$ the active set names the equalities $A$, always active; the active inequality
rows $\Sset$, those held at $g_i\trans x = h_i$, the remainder carrying a zero multiplier
by complementary slackness; and the blocked variables $\B$, each pinned either at a bound
or, when $\tau\not\equiv0$, at zero by the penalty, with free set $\F=\B^{c}$; under a
penalty every free coordinate is therefore nonzero and carries a fixed sign
$s_i=\operatorname{sign}(x_i)$. Stack the active linear constraints, with their
right-hand side, as
\[
C=\begin{bmatrix} A \\[2pt] G_{\Sset} \end{bmatrix},
\qquad
r(\lambda)=\begin{bmatrix} b(\lambda) \\[2pt] h_{\Sset}(\lambda) \end{bmatrix},
\]
writing $r(\lambda)$ for the general case, constant in $\lambda$ unless the constraint
data move. Since \eqref{eq:pqp} is convex with linear
constraints and a convex penalty the KKT
conditions are necessary and sufficient, and reading the stationarity and active-row
feasibility conditions on the free block, with $x_\B$ held at its bounds, collects the
unknowns $(x_\F,\zeta)$ into
\begin{equation}
\label{eq:border}
\begin{bmatrix} H_{\F\F} & C_\F\trans \\[2pt] C_\F & 0 \end{bmatrix}
\begin{bmatrix} x_\F \\[2pt] \zeta \end{bmatrix}
=
\begin{bmatrix} c_\F(\lambda) - \tau(\lambda)\,s_\F
   - H_{\F\B}\,x_\B \\[2pt] r(\lambda) - C_\B\,x_\B \end{bmatrix},
\end{equation}
where $\zeta$ carries the equality and active-inequality multipliers. The coefficient
matrix is the saddle-point matrix of the active constraints, nonsingular under the
standing assumption, and it depends only on $H$ and the active normals, so it is
constant on the interval while the right-hand side is affine in $\lambda$. That is the
lemma. Appendix~\ref{sec:ratios} records the sign convention, what the bordered block
absorbs, and the elimination that solves \eqref{eq:border} in practice.

One step of that reading is the penalty's, and we give it because it is the one that lets
one template carry both face mechanisms. The subdifferential of $\tau(\lambda)\|x\|_1$
at $x$ is $\tau(\lambda) g$ with $g_i=s_i$ where $x_i\neq0$ and $g_i\in[-1,1]$ where
$x_i=0$. On the free block it is data, affine in $\lambda$, and joins the right-hand side
as $\tau(\lambda)s_\F$. Where $x_i=0$ it is not data, and the coordinate is blocked
exactly as a bound blocks one, its multiplier $\rho_i$ confined to
$[-\tau(\lambda),\tau(\lambda)]$ rather than to a one-sided cone. So a coordinate the
penalty pins is a blocked variable like any other, and the price is one dual-feasibility
condition, $|\rho_i|\le\tau(\lambda)$, carried alongside
the usual $\zeta_{\Sset}\ge0$. That condition is the moving threshold of
Section~\ref{sec:events}, and the reason the $\ell_1$ event fires at $|\rho_i|=\lambda$
rather than at a fixed bound.

\begin{remark}[The LASSO, read off \eqref{eq:border}]
\label{rem:lassocheck}
Take $H=X\trans X$, $q=X\trans y$, $d=0$, $\tau(\lambda)=\lambda$, no $A$, $G$ or
bounds. Then $\B$ is
the set of zero coefficients and $\F$ the support with signs $s_\F$, $C$ is empty, and
\eqref{eq:border} collapses to
\[
X_\F\trans X_\F\, x_\F = X_\F\trans y - \lambda\, s_\F ,
\]
the familiar LASSO segment equation, with offset
$\alpha_\F=(X_\F\trans X_\F)^{-1}X_\F\trans y$ and slope
$\delta_\F=-(X_\F\trans X_\F)^{-1}s_\F$. The $(\F,\B)$ partition carries the whole of
the penalty's bookkeeping: the two procedures are instances of the same active-set KKT
mechanism, the $\ell_1$ penalty and the box producing analogous primal and dual events,
and $\tau$ is what selects between them.
\end{remark}

That penalty-only specialisation is the statistical one, and it is the squared-loss,
$\ell_1$ case of the piecewise-linear-path characterisation of \citet{rosset2007}, who
generalise in a direction we do not pursue, to non-quadratic losses. We carry the
constrained form because the statistical literature does not: no single penalty
$\lambda J$ can express general equalities, inequalities and box constraints at once,
since $\lambda$ scales the penalty whereas the constraints of \eqref{eq:pqp} are fixed.
That, with the converse noted above, is what admits the constraint-driven members of the
family, the Critical Line Algorithm and the constrained LASSO among them.

The intervals partition the $\lambda$-axis at \emph{breakpoints} where the active set
must change. Continuity of $x(\lambda)$ across breakpoints (the free block is
uniquely determined on each segment, and adjacent segments agree at the breakpoint)
lets a \emph{homotopy} recover the whole path: from a vertex with known
active set, follow \eqref{eq:affine} to the first event, update the active set by that
one event, repeat. Each step is one solve against the active block of $H$ bordered by
the active normals. We call \eqref{eq:pqp}, \eqref{eq:affine}, and this homotopy the
\emph{common scheme}. Lemma~\ref{lem:affine} is only half of it, and the less
interesting half: it says what a segment is, not how one segment gives way to the next.
That second half, the evolution of the active set, is where each field's paper does its
technical work, and we give it next.

\subsection{How the active set evolves}
\label{sec:events}

A representation is not an algorithm. Lemma~\ref{lem:affine} supplies a \emph{master
representation}, the bordered system whose solution is the segment; what turns it into a
\emph{master algorithm} is the catalogue of events that can end a segment, each with the
step to it, and the update each event makes to $(\F,\B,\Sset)$. This is the part that
the instances present in mutually unrecognisable notation, as entering and leaving
variables, as assets reaching and releasing their bounds, as points crossing the margin,
or as a critical-region facet, and it is the part in which their equivalence is least
obvious. We write it once.

Fix the current breakpoint $\lambda_k$ and a direction of travel; write
$\lambda=\lambda_k+\sigma t$ with $\sigma\in\{+1,-1\}$ and call $t\ge0$ the step. Two
solves of \eqref{eq:border} against one factorisation give the segment
$x_\F(\lambda)=\alpha_\F+\lambda\delta_\F$ and the multipliers $\zeta(\lambda)$, and the
bound multipliers $\rho(\lambda)$ follow from stationarity; all are affine in $\lambda$,
so every quantity below moves at a constant rate. Four things, and only four, can end the
segment: a primal quantity can reach a bound it must respect, or a dual quantity can
reach the edge of the cone it must lie in. Each gives a ratio, the step is the smallest
of them, and the event attaining it fires.

\emph{$\mathrm{P}_1$: a free variable reaches a bound.} For $i\in\F$ the coordinate $x_i$
must stay in $[\lb_i,\ub_i]$; the event moves $i$ from $\F$ to $\B$, pinned where it
landed.

\emph{$\mathrm{P}_2$: an inactive inequality becomes active.} For $j\notin\Sset$ the
slack $h_j-g_j\trans x$ must stay non-negative; the event adjoins $j$ to $\Sset$, adding
a row to $C$.

\emph{$\mathrm{D}_1$: a blocked variable releases.} For $i\in\B$ dual feasibility confines
$\rho_i(\lambda)$ to an interval whose endpoints are themselves affine in $\lambda$; the
event is the first zero of either margin and moves $i$ from $\B$ to $\F$.

\emph{$\mathrm{D}_2$: an active inequality releases.} For $j\in\Sset$ the multiplier must
satisfy $\zeta_j\ge0$; the event deletes $j$ from $\Sset$. Equality rows of $A$ never
generate it, their multipliers being unsigned.

Writing each quantity as its value at $\lambda_k$ plus $\sigma t$ times its rate turns
each into a ratio, and the step $t^\star$ is the smallest of the four;
Appendix~\ref{sec:ratios} collects the formulas, where $\mathrm{P}_1$ and $\mathrm{D}_1$
turn out to be the same test, one against the box and one against the cone.

What is worth dwelling on is $\mathrm{D}_1$, because it is where the
two kinds of active set meet. For a genuine box the confining interval is $[0,\infty)$
at an upper bound and $(-\infty,0]$ at a lower one, so $\dot\lb^\rho=\dot\ub^\rho=0$ and
the event is simply $\rho_i=0$, a fixed threshold. For the $\ell_1$ penalty the interval
is $[-\tau(\lambda),\tau(\lambda)]$, so $\dot\lb^\rho=-\tau_1$ and
$\dot\ub^\rho=+\tau_1$, and the event is $|\rho_i|=\tau(\lambda)$, an affine function
meeting a threshold that travels with $\lambda$. That is the condition $|\rho_i|\le\tau(\lambda)$
on a coordinate the penalty pins, derived above and inherited from $g_i\in[-1,1]$. That single generalisation, a feasibility boundary
in motion, is the whole of what separates an \emph{induced} active set from an
\emph{imposed} one at the level of the algorithm.

The walk continues until no event is reachable. It is the simplex method's ratio test and
inherits the simplex method's degeneracy hazard, ties at $t^\star=0$ and cycling, with
its remedy: a lowest-index (Bland) tie-break, which makes the walk deterministic and
finite. Because exactly one index changes status per step, the factorisation of the
bordered matrix is updated, not rebuilt.

\begin{table}[t]
\centering
\footnotesize
\caption{The four events of Section~\ref{sec:events} as each literature names them. The
algorithms differ in which families are populated, not in the families.}
\label{tab:events}
\begin{tabular}{@{}>{\raggedright\arraybackslash}p{0.14\linewidth}>{\raggedright\arraybackslash}p{0.36\linewidth}>{\raggedright\arraybackslash}p{0.38\linewidth}@{}}
\toprule
Event & LASSO\,/\,LARS & Critical Line Algorithm \\
\midrule
$\mathrm{P}_1$ & a coefficient crosses zero (the LASSO ``leave'' move) & an asset reaches
a bound and leaves the free set \\[2pt]
$\mathrm{P}_2$ & a linear inequality on $\beta$ becomes tight & a group or turnover
constraint becomes tight \\[2pt]
$\mathrm{D}_1$ & an inactive coordinate's correlation reaches $\lambda$ (the ``enter''
move) & an asset's reduced cost changes sign and it re-enters \\[2pt]
$\mathrm{D}_2$ & an active inequality's multiplier hits zero & a constraint stops binding
at a turning point \\
\bottomrule
\end{tabular}
\end{table}

Table~\ref{tab:events} reads the catalogue back into two of the dialects. The instances
differ in which families they populate, not in the families themselves. Forward LARS has
$\mathrm{D}_1$ only, which is exactly what makes its active set monotone; the LASSO
modification restores $\mathrm{P}_1$ and with it the leave move; the Critical Line
Algorithm populates all four against fixed rather than moving feasibility intervals,
\citet{markowitz1955} carrying general inequalities from the outset and giving every
crossing explicitly; the constrained LASSO of \citet{gaines2018} is that catalogue
rediscovered in the statistical setting;
and the support-vector-machine path of
\citet{hastie2004svm} is $\mathrm{P}_1$ and $\mathrm{D}_1$ with the margin supplying the
moving threshold. The catalogue is a statement about a scalar step; Remark~\ref{rem:mpc} says how much of
it survives when the parameter is a vector, which is less than the segment
representation does. Appendix~\ref{sec:ratios} writes the loop out.

\noindent We should be careful about what this is. It is not the most general
path-following algorithm for a parametric quadratic program: \citet{gartner2009} give one
that is more general in the direction that matters most, since it handles a singular
quadratic form without perturbing the problem, which Remark~\ref{rem:rank} is precisely
unable to do. What the box gives is the four events written once, in a notation close
enough to each field's own that the specialisations are checkable by inspection, which is
what Table~\ref{tab:events} and Remark~\ref{rem:larsstep} then do.

\begin{remark}[Where the analogy stops: explicit model predictive control]
\label{rem:mpc}
The constrained linear-quadratic regulator solved as a function of the state
\citep{bemporad2002, tondel2003} is the multidimensional cousin: the parameter is a
\emph{vector}, the state, so the interval of breakpoints becomes a polyhedral partition
into critical regions. It is not that the parameter enters in the wrong place. The state
enters the constraint right-hand side, $b(\theta)$ and $h(\theta)$, which \eqref{eq:pqp}
admits, and along a fixed ray through parameter space the mpQP is an instance of the
common scheme, data and all, ratio test included. What does not carry over is the
\emph{algorithm}. Lemma~\ref{lem:affine} survives, which is why the explicit control law
is piecewise affine; but exploring a polyhedral partition is not a walk along a line, and
its difficulties (degenerate critical regions, redundant facets, region merging, and
the combinatorics of which facet to cross next) have no counterpart in the scalar case
and are the subject matter of that literature. Region traversal is not the ratio test
with a direction substituted for a sign, and we claim correspondingly less for this
member than for the others.
\end{remark}

\paragraph{$H$ is an operator, not a matrix}
Inspecting the step of Lemma~\ref{lem:affine}, the quadratic form enters only
through three linear actions: a matrix--vector product $v\mapsto Hv$, a solve $Hv=r$
restricted to the active block, and the cross product coupling active and inactive
coordinates. Anything supplying those three can play the role of $H$, which is never
needed as a stored $n\times n$ matrix. A factor covariance, a kernel and a sparse
design each supply the three cheaply, which is what turns one algorithm into a method for
many problems.

\section{The exact identity}
\label{sec:identity}

We now show that the two most dissimilar members of the family trace, under one
substitution, the same curve, and that the identity is robust to general constraints.
The claim is about the paths and not about the forms: \eqref{eq:lasso} is an instance of
the common scheme, while \eqref{eq:mark} reaches the same curve through the theorem
rather than by being one. Figure~\ref{fig:correspondence} is the whole of it in one
picture.

\begin{figure*}[t]
\centering
\includegraphics[width=0.96\textwidth]{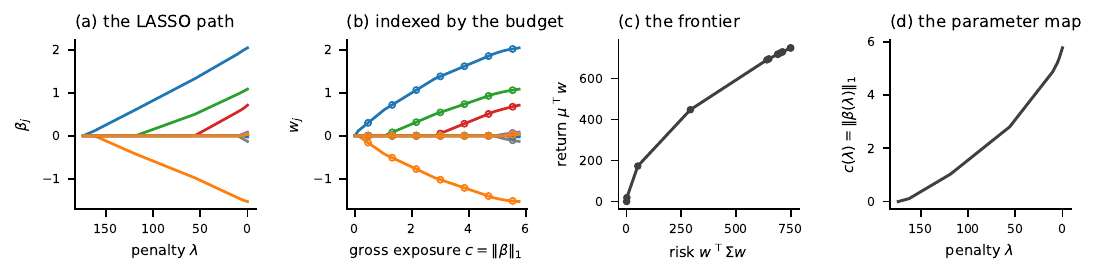}
\caption{One curve, four readings. All four panels come from a single $(X,y)$, with
$\Sig=X\trans X$ and $\mu=X\trans y$. \emph{(a)} the LASSO coefficient path against the
penalty. \emph{(b)} the same coefficients against the gross exposure
$c=\|\beta\|_1$, which is \eqref{eq:mark}. \emph{(c)} the same points in the
(risk, return) plane, which is the efficient frontier traced corner to corner.
\emph{(d)} the map $c(\lambda)$ of Theorem~\ref{thm:identity} between the two
parametrisations. The open markers in \emph{(b)} are the $\ell_1$-capped least-squares
problem solved independently as a quadratic program at seven caps, agreeing with the
homotopy to solver precision. Panels \emph{(a)} and \emph{(b)} differ only by that map, which is why
they look alike: Proposition~\ref{prop:reg} shows it is strictly decreasing here, so the
two are reparametrisations of one curve and not two curves that resemble each other.}
\label{fig:correspondence}
\end{figure*}

\subsection{The efficient frontier}
\label{sec:frontier-primer}

The efficient frontier is an object the statistician already knows under another name. An
investor spreads weights $w$ over $n$ assets and wants high expected return $\mu\trans w$
and low risk $w\trans\Sig w$ at once. No portfolio does both, so the object of study is
the set of undominated ones, and it is traced by scalarisation: minimise
$\tfrac12 w\trans\Sig w-\lambda\,\mu\trans w$ subject to the budget
$\mathbf 1\trans w=1$ and whatever else the mandate imposes. That is \eqref{eq:pqp} with
$H=\Sig$, $d=\mu$ and no penalty ($q=0$, $\tau\equiv0$), so the frontier is an instance
of the common scheme before any identity is proved. Sweeping $\lambda$ from $0$ to
$\infty$ runs from minimum variance to maximum return and traces the frontier exactly, as
parabolic arcs joined at \emph{corner portfolios}, which are its breakpoints. We draw
the frontier in the (variance, return) plane throughout. The variance is the quadratic
form of the program itself, and under Theorem~\ref{thm:identity} the LASSO's residual sum
of squares, so the arcs are genuinely parabolic and the geometric statements below are
statements about the objective rather than about its square root; the volatility
convention is a monotone relabelling of one axis and changes none of them.

Those arcs meet with continuous slope: $\mathrm{d}V/\mathrm{d}E$ is proportional to the
tilt $\lambda$, which moves continuously along the path, so only the curvature breaks at
a corner, and the square root preserves that wherever $V>0$. The two fields have been
drawing the same corners for seventy years, and only one of them draws them where they
show.

The frontier is therefore a regularisation path in all but name, differing only in which
objectives are traded: return against variance here, fit against sparsity for the LASSO.
One caution on the parameter. This $\lambda$ is a \emph{return tilt} and not a risk
aversion, and Section~\ref{sec:bridge} shows that reaching the identity from the frontier
costs a rescaling because of it.

\subsection{Two programs, one curve}

\paragraph{The two programs}
For weights $w\in\R^n$, expected returns $\mu$, and covariance $\Sig\succeq 0$
(positive \emph{semi}definite: under $\Sig=X\trans X$ the finance instance's
$\Sig\succ0$ would exclude $n>m$, the regime the LASSO is built for, and the standing
assumption of Section~\ref{sec:scheme} is what is actually needed), the
gross-exposure-constrained mean--variance program is
\begin{equation}
\label{eq:mark}
\begin{aligned}
\text{M}_c:\quad \min_{w}\ &\tfrac12\, w\trans \Sig\, w - \mu\trans w \\
\text{s.t.}\quad &\|w\|_1 \le c,\ \ A w = b,\ \ G w \le h .
\end{aligned}
\end{equation}
For a response $y\in\R^m$ and design $X\in\R^{m\times n}$, so that $m$ counts
observations and $n$ counts predictors (assets, on the portfolio side), the
constrained LASSO is
\begin{equation}
\label{eq:lasso}
\begin{aligned}
\text{L}_\lambda:\quad \min_{\beta}\ &\tfrac12\,\|y - X\beta\|_2^2 + \lambda\,\|\beta\|_1 \\
\text{s.t.}\quad &A\beta = b,\ \ G\beta \le h .
\end{aligned}
\end{equation}
The two programs keep their fields' conventional symbols, $w$ for portfolio weights and
$\beta$ for regression coefficients; under the substitution of Theorem~\ref{thm:identity}
these are one and the same vector, and below we identify them freely. The substitution
runs both ways: from the regression side $(X,y)$ are given and $\Sig,\mu$ read off them,
while from the portfolio side any $\Sig\succeq0$ factors as $X\trans X$ and $\mu=X\trans
y$ is solvable exactly when $\mu\in\operatorname{range}(\Sig)$, which is to require
that the mean--variance problem have a finite answer, since a $\mu$ outside that range
promises return along a direction the risk model believes carries no risk.

\paragraph{The leverage axis} Programme \eqref{eq:mark} caps the gross exposure
$\|w\|_1$, which the frontier of Section~\ref{sec:frontier-primer} never mentions.
Adding it makes the efficient set a \emph{surface} in (return, risk, leverage), of which
a frontier is a one-dimensional cut; which cut one takes is a choice, and different
literatures have taken different ones (Remark~\ref{rem:precedents}). Less of a choice
than it appears: where the remaining constraints are homogeneous, that surface is the
cone over a single curve, and the cuts differ by a rescaling rather than in their
geometry (Corollary~\ref{cor:tilt}). A long-only, fully
invested mandate pins $\|w\|_1=\mathbf 1\trans w=1$, so the cap does no work and the
surface degenerates to the familiar curve. Long--short books are what make the axis live,
gross exposure varying there and costing something, and they are what \eqref{eq:mark} is
for.

\begin{theorem}[CLA--LASSO identity under general constraints]
\label{thm:identity}
Suppose $\Sig = X\trans X$ and $\mu = X\trans y$, and suppose the columns of $X$ are in
general position and the linear-independence constraint qualification holds at every
active set the path visits. Then $\mathrm{L}_\lambda$ has a unique minimiser
$\beta(\lambda)$ at every $\lambda$, so that $\lambda\mapsto\beta(\lambda)$ is a
\emph{single-valued} path and the homotopy of Section~\ref{sec:scheme} is well defined.
Everything below is a statement about that path. None of it extends to the
rank-deficient case, where the minimiser may be a set rather than a point and there is no
curve to speak of (Remark~\ref{rem:rank}).

Put $c(\lambda)=\|\beta(\lambda)\|_1$, which is nonincreasing, and continuous by
Proposition~\ref{prop:reg}.
Then $\beta(\lambda)$ solves $\mathrm{M}_{c(\lambda)}$ for every $\lambda\ge0$, and on
every interval of $\lambda$ on which $c$ is \emph{strictly} decreasing the two programs
have the same solution set (so $\mathrm{M}_c$ is single-valued there as well) and
their paths are the \emph{same} piecewise-linear curve in $\R^n$, traced by the same
active-set homotopy and sharing the same breakpoints as points of $\R^n$.
Proposition~\ref{prop:flat} treats the remaining case, in which $\mathrm{M}_c$ acquires
a continuum of minimisers and the two curves genuinely part.
\end{theorem}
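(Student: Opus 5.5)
The proof rests on one rewriting: under $\Sig=X\trans X$ and $\mu=X\trans y$ the two objectives differ by a constant. Indeed
\[
\tfrac12\|y-X\beta\|_2^2=\tfrac12\beta\trans\Sig\beta-\mu\trans\beta+\tfrac12\|y\|_2^2 ,
\]
so $\mathrm{L}_\lambda$ is the Lagrangian (penalised) form of $\mathrm{M}_c$, with the $\ell_1$ cap moved into the objective. The argument then has three parts.

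\emph{Uniqueness.} Since $\mathrm{L}_\lambda$ is convex, two minimisers share the fitted value $X\beta$. The penalty is then common to both, and the optimality conditions force them onto a common active set with common signs. There the bordered matrix \eqref{eq:border} is nonsingular: $X_\F\trans X_\F\succ0$ by general position, and $C_\F$ has full row rank by LICQ. So $x_\F$ is determined and the minimiser is a point. I expect to lean on Remark~\ref{rem:unique} for the general-position step rather than reprove it.

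\emph{Optimality of $\beta(\lambda)$ for $\mathrm{M}_{c(\lambda)}$.} Let $w$ be feasible for $\mathrm{M}_{c(\lambda)}$. Then $w$ satisfies $Aw=b$ and $Gw\le h$, so it is feasible for $\mathrm{L}_\lambda$, and $\|w\|_1\le c(\lambda)=\|\beta(\lambda)\|_1$. Writing $f=\tfrac12 w\trans\Sig w-\mu\trans w$, optimality in $\mathrm{L}_\lambda$ gives
\[
f(\beta)+\lambda\|\beta\|_1\le f(w)+\lambda\|w\|_1\le f(w)+\lambda\|\beta\|_1 ,
\]
hence $f(\beta)\le f(w)$. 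This needs no KKT machinery, and it shows $\beta(\lambda)$ solves $\mathrm{M}_{c(\lambda)}$ for every $\lambda\ge0$. Monotonicity of $c$ comes from the usual exchange argument: add the optimality inequalities at $\lambda_1<\lambda_2$ to get $(\lambda_2-\lambda_1)(\|\beta_2\|_1-\|\beta_1\|_1)\le0$. Continuity I take from Proposition~\ref{prop:reg}.

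\emph{Equality of solution sets where $c$ is strictly decreasing.} This is the main obstacle. Let $w^\star$ be any minimiser of $\mathrm{M}_{c}$ with $c=c(\lambda)$. I would take a KKT multiplier $\nu\ge0$ for the cap, which exists because the feasible set is polyhedral. Then $w^\star$ minimises $f+\nu\|\cdot\|_1$ over the remaining constraints, so $w^\star=\beta(\nu)$ by uniqueness, and $c(\nu)=\|w^\star\|_1\le c$.

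\begin{itemize}
\item If the cap is slack, take $\nu=0$; I must rule this out, or reduce it to the same case.
\item The key step is to show $\nu=\lambda$. Both $\beta(\lambda)$ and $\beta(\nu)$ attain the same optimal value of $\mathrm{M}_c$, since $\beta(\lambda)$ is optimal by the second part. By strict convexity of $f$ restricted to the common support, or by uniqueness of the fit $Xw$ for $\mathrm{M}_c$ (which follows since $f$ is strictly convex in $Xw$), $X\beta(\nu)=X\beta(\lambda)$.
\item Hence $f(\beta(\nu))=f(\beta(\lambda))$. The optimality inequality of $\mathrm{L}_\lambda$ with strictness from uniqueness then gives $\|\beta(\nu)\|_1>\|\beta(\lambda)\|_1$ unless $\beta(\nu)=\beta(\lambda)$.
\item The first alternative contradicts $c(\nu)\le c$. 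So $w^\star=\beta(\lambda)$.
\end{itemize}

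The subtle point is the cap-slack or $\nu$-degenerate configuration, which is exactly where $c$ is flat. On a strictly decreasing stretch, $\nu\ne\lambda$ would give $c(\nu)\ne c(\lambda)$ by strict monotonicity, contradicting $\|w^\star\|_1=c$ when the cap binds. If the cap is slack, $w^\star$ is the unconstrained-cap minimiser $\beta(0)$, and $c(0)<c(\lambda)$ would contradict monotonicity. That case therefore only occurs on the flat piece deferred to Proposition~\ref{prop:flat}.

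\emph{Conclusion.} With the solution sets equal, $\lambda\mapsto c(\lambda)$ is a continuous strictly monotone reparametrisation. So the two paths are one set in $\R^n$, piecewise linear by Lemma~\ref{lem:affine}. The breakpoints, the points where the active set changes, are intrinsic to the curve as a subset of $\R^n$ together with its active sets. They therefore coincide, and the homotopy of Section~\ref{sec:scheme} traces both.
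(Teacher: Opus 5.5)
Your argument is correct in substance. Your second part, that $\beta(\lambda)$ solves $\mathrm{M}_{c(\lambda)}$ because a strictly better capped point would beat it in $\mathrm{L}_\lambda$, is the paper's own duality-free step. Your exchange argument for the monotonicity of $c$ proves what the paper only asserts.

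The routes part at the reverse inclusion. The paper uses Lagrangian duality. For every cap admitting a feasible point with $\|w\|_1<c$, Slater gives a multiplier $\lambda^\star$ such that the $\mathrm{M}_c$-feasible minimisers of $\mathrm{L}_{\lambda^\star}$ are exactly the minimisers of $\mathrm{M}_c$; the smallest feasible cap is treated apart. That argues cap by cap from the portfolio side, reaches every feasible cap whether or not it is a value of $c(\cdot)$, and gives containment of solution sets without any uniqueness.

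You instead spend the uniqueness of the $\mathrm{L}_\lambda$ minimiser, and once you do, the multiplier $\nu$ is idle. Your polyhedrality justification for $\nu$ is sound, and would spare the paper its Slater case split, but it is not needed. Any minimiser $w^\star$ of $\mathrm{M}_{c(\lambda)}$ has $f(w^\star)=f(\beta(\lambda))$ and $\|w^\star\|_1\le\|\beta(\lambda)\|_1$. So it ties $\beta(\lambda)$ in $\mathrm{L}_\lambda$ and therefore equals it. This is more elementary and is all the statement needs, since the statement concerns the caps $c(\lambda)$. Extending it to every cap would take two additions:
an intermediate-value step (continuity of $c$, and $c(\lambda)$ tending to the smallest feasible cap as $\lambda\to\infty$), and
the remark that caps above $c(0)$ are slack with solution $\beta(0)$.

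Your closing paragraph on the \emph{subtle point} misplaces the role of strict decrease. Your bullets never use it. The slack case needs no special handling: the bullets run unchanged with $\nu=0$, and monotonicity excludes that case at every $\lambda$ anyway, not only off flat pieces. Strict decrease enters only as injectivity of $\lambda\mapsto c(\lambda)$, which is exactly the role the paper's proof gives it.

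Your tie argument in fact shows more. Under the theorem's uniqueness hypothesis, a flat stretch with $\beta(\lambda_1)\neq\beta(\lambda_2)$ cannot occur, since $\beta(\lambda_2)$ would be a second minimiser of $\mathrm{L}_{\lambda_1}$. So the configuration of Proposition~\ref{prop:flat} lies outside the hypotheses, not on a flat piece within them.

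Finally, the paper's proof does not argue the uniqueness clause at all; it leans on Remark~\ref{rem:unique}, and your sketch is at that level. One correction to the sketch: one Lagrange multiplier for the linear rows serves every minimiser. So the KKT conditions yield a common $\ell_1$ subgradient. That places every minimiser's support inside the common set where this subgradient has modulus one, with common signs, rather than on a common active set. The nonsingularity you invoke is therefore needed on that larger set.
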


\begin{proof}
Completing the square,
\[
\tfrac12\|y-Xw\|_2^2 = \tfrac12\, w\trans (X\trans X)\, w - (X\trans y)\trans w
+ \tfrac12\|y\|_2^2 .
\]
With $\Sig=X\trans X$ and $\mu=X\trans y$ the
objective of $\mathrm{M}_c$ equals $\tfrac12\|y-Xw\|_2^2$ up to the additive constant
$\tfrac12\|y\|_2^2$, which does not affect the minimiser. Hence, under the same linear
constraints $Aw=b,\ Gw\le h$, $\mathrm{M}_c$ is exactly the constrained least-squares
problem $\min_w \tfrac12\|y-Xw\|_2^2$ subject to $\|w\|_1\le c$ and those constraints.
Its Lagrangian relaxation of the single constraint $\|w\|_1\le c$ is
$\min_w \tfrac12\|y-Xw\|_2^2 + \lambda\|w\|_1$ subject to $Aw=b,\ Gw\le h$, i.e.\
$\mathrm{L}_\lambda$. Both directions of the pointwise correspondence are then standard,
and we claim nothing for them. One needs no duality at all: $\beta(\lambda)$ is feasible
for $\mathrm{M}_{c(\lambda)}$ by construction, and any strictly better point for
$\mathrm{M}_{c(\lambda)}$ would, having $\ell_1$ norm at most $c(\lambda)$, be strictly
better for $\mathrm{L}_\lambda$ too. The other is Lagrangian duality: for a cap $c$
admitting a point of $\{A w=b,\ Gw\le h\}$ with $\|w\|_1<c$, Slater's condition holds
(the remaining rows are affine), so there is no duality gap and an optimal
multiplier $\lambda^\star\ge0$ exists at which the minimisers of $\mathrm{L}_{\lambda^
\star}$ that are $\mathrm{M}_c$-feasible are exactly the minimisers of $\mathrm{M}_c$.
Slater fails only at the smallest feasible cap, where the ball is tight throughout and
the two problems agree trivially. What strict monotonicity then adds is
\emph{injectivity}: the correspondence matches breakpoints one-to-one rather than merely
covering them. Both
objectives are quadratic and both feasible sets polyhedral, so each path is piecewise
linear. For $\mathrm{L}_\lambda$ that is Lemma~\ref{lem:affine} directly. For
$\mathrm{M}_c$ it is Lemma~\ref{lem:affine} one interval at a time: $\mathrm{M}_c$ is
not globally of the form \eqref{eq:pqp}, since the active facet of $\|w\|_1\le c$ has
normal the current sign pattern $s$, which \eqref{eq:pqp} holds fixed; but on an interval
where the active set is constant so is $s$, there $\|w\|_1=s\trans w$, and
$\mathrm{M}_c$ agrees with the instance of \eqref{eq:pqp} whose $G$ carries the extra
row $s\trans w\le c$. The intervals are finite in number. Sharing a solution set at every parameter value, they
are the same set of points with the same corners, visited in the same order.
\end{proof}

\begin{corollary}[One-to-one breakpoints]
\label{cor:bijection}
If in addition the support's sign vector $s_\F$ fails to lie in the span of the active
constraint normals on every segment the path visits (automatic in the unconstrained
case, by Proposition~\ref{prop:reg}), then $c$ is
strictly decreasing throughout, $\lambda\mapsto c(\lambda)$ is a bijection onto its
range, and the two breakpoint sequences match one-to-one.
\end{corollary}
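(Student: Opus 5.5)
Here is the plan. On each segment of the path I will compute the slope of $c(\lambda)=s_\F\trans x_\F(\lambda)$ and show it is strictly negative exactly when $s_\F\notin\operatorname{range}(C_\F\trans)$. After that, strict monotonicity of $c$ across breakpoints follows from continuity (Proposition~\ref{prop:reg}). Theorem~\ref{thm:identity} then supplies the bijection and the matching of breakpoints.

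\emph{Slope on one segment.} Fix a segment with active set $(\F,\B,\Sset)$. There the LASSO is the instance of \eqref{eq:pqp} with $H=X\trans X$, $q=X\trans y$, $d=0$ and $\tau(\lambda)=\lambda$. By Lemma~\ref{lem:affine}, $x_\F=\alpha_\F+\lambda\delta_\F$, and differentiating \eqref{eq:border} gives
\[
H_{\F\F}\delta_\F+C_\F\trans\dot\zeta=-s_\F,\qquad C_\F\delta_\F=0 .
\]
The constraint data are fixed and the blocked coordinates sit at zero, since the LASSO has no box. The signs are constant on the segment, so $c(\lambda)=s_\F\trans x_\F$ and $c'=s_\F\trans\delta_\F$. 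Pairing the first equation with $\delta_\F$ and using $C_\F\delta_\F=0$ gives $s_\F\trans\delta_\F=-\delta_\F\trans H_{\F\F}\delta_\F\le0$. By the standing assumption $H_{\F\F}\succ0$, so $c'<0$ unless $\delta_\F=0$. If $\delta_\F=0$, the first equation reads $s_\F=-C_\F\trans\dot\zeta\in\operatorname{range}(C_\F\trans)$, which the hypothesis excludes. Hence $c$ is strictly decreasing on every open segment. In the unconstrained case $C_\F$ is empty, so the condition is just $s_\F\neq0$, which holds because the support is nonempty off the terminal ray. This is the remark attributed to Proposition~\ref{prop:reg}.

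\emph{Gluing segments.} By Lemma~\ref{lem:affine} and the finiteness of breakpoints noted in the proof of Theorem~\ref{thm:identity}, only finitely many breakpoints lie in any bounded interval. The function $c$ is continuous by Proposition~\ref{prop:reg} and strictly decreasing on each open piece between breakpoints. A continuous function that is strictly decreasing on each of finitely many adjacent intervals is strictly decreasing on their union. Since $c$ is continuous and strictly decreasing, it is a bijection from the $\lambda$-range onto an interval of caps, and its inverse is continuous.

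\emph{Breakpoints.} With $c$ strictly decreasing everywhere, Theorem~\ref{thm:identity} applies on the whole range: $\mathrm{M}_c$ and $\mathrm{L}_\lambda$ have the same single-valued solution at corresponding parameters. A breakpoint of the LASSO path is a point where $\beta$ changes affine piece as a curve in $\R^n$. Since $\lambda\mapsto c(\lambda)$ is a homeomorphism that is affine on each segment, the same point is a breakpoint of the $\mathrm{M}_c$ path. Conversely, each $\mathrm{M}_c$ breakpoint pulls back through $c^{-1}$ to a LASSO breakpoint. This gives the one-to-one matching.

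\emph{Main obstacle.} The one step needing care is the endpoints. At $\lambda$ beyond the last entry the support may be empty, or the path may sit at the smallest feasible cap where $\delta_\F=0$ is forced. I would restrict the claim to the range the path actually traverses with nonempty support, reading "every segment the path visits" as excluding the terminal segment on which $c$ is constant. Monotonicity is only claimed up to that point, and the smallest-cap endpoint is exactly the Slater-failure case already handled in the proof of Theorem~\ref{thm:identity}.
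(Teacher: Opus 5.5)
Your proposal is correct and follows the paper's route. You take the per-segment slope $c'(\lambda)=s_\F\trans\delta_\F$ of Proposition~\ref{prop:reg}, show it is strictly negative unless $s_\F$ lies in the span of the active normals, glue segments by continuity, and then use Theorem~\ref{thm:identity} to match the breakpoints. The only difference is cosmetic. You obtain $c'=-\delta_\F\trans H_{\F\F}\delta_\F$ by pairing the differentiated bordered system with $\delta_\F$, whereas the paper eliminates $\zeta$ through a null-space basis $Z$ and writes the same quantity as $-\|(Z\trans H_{\F\F}Z)^{-1/2}Z\trans s_\F\|_2^2$; your identity needs positivity of $H_{\F\F}$ only on $\ker C_\F$, which is exactly the paper's reduced-Hessian condition.
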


\subsection{The parameter map, and where it fails}

Both curves are now in hand, and both are indexed by parameters of their own. What
remains is the map between those parameters, $c(\lambda)$ of Theorem~\ref{thm:identity},
which can be pinned down exactly. Doing so sharpens the theorem and isolates the only way
it degenerates.

\begin{proposition}[Regularity of the $c$--$\lambda$ correspondence]
\label{prop:reg}
Let $\beta(\lambda)$ be the constrained LASSO path and
$c(\lambda)=\|\beta(\lambda)\|_1$ as in Theorem~\ref{thm:identity}. Fix a segment, with
free block $\F$ carrying signs $s_\F$, and let $\delta_\F$ be the free block of the
solution of \eqref{eq:border} with right-hand side $(-s_\F;\,0)$, the slope of
Lemma~\ref{lem:affine} for the $\ell_1$ segment. Then $c$ is continuous and piecewise
linear, with
\[
c'(\lambda)=s_\F\trans\delta_\F ,
\]
the coordinates held at a bound contributing a constant to $\|\beta(\lambda)\|_1$ and so
not entering the slope. In the penalty-only case the blocked coordinates are the zero
ones, $\F$ is the support, and $\delta_\F=-(X_\F\trans X_\F)^{-1}s_\F$, so
$c'(\lambda)=-s_\F\trans(X_\F\trans X_\F)^{-1}s_\F<0$ on every segment with nonempty
support and $c$ is a strictly decreasing, piecewise-linear bijection from
$(0,\lambda_{\max}]$ onto $[0,\|\beta(0)\|_1)$, where $\lambda_{\max}$ is the smallest
$\lambda$ at which the support is empty. Under constraints $c'(\lambda)\le0$
always, and the bijection survives whenever the reduced Hessian on the active constraint
tangent space is positive definite and $s_\F$ is not annihilated by it.
\end{proposition}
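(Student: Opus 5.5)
The plan is to work one segment at a time and then glue the segments together by continuity. Fix a maximal interval on which the active set $(\F,\B,\Sset)$ is constant. By Lemma~\ref{lem:affine}, on that interval $\beta_\F(\lambda)=\alpha_\F+\lambda\delta_\F$. Because $\tau(\lambda)=\lambda$, $d=0$ and the constraint data are fixed, the only $\lambda$-dependence in the right-hand side of \eqref{eq:border} is the term $-\lambda s_\F$. The slope is therefore the free block of the solution with right-hand side $(-s_\F;0)$, as the statement says.

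On the segment every free coordinate keeps its sign $s_i$, so $\sum_{i\in\F}|\beta_i|=s_\F\trans\beta_\F$. The blocked coordinates are pinned at a bound or at zero, so they contribute a constant. Hence $c(\lambda)=\text{const}+s_\F\trans\alpha_\F+\lambda\,s_\F\trans\delta_\F$, which is affine with $c'=s_\F\trans\delta_\F$. Continuity of $c$ follows from continuity of $\beta(\lambda)$: on each segment the solution is unique by Theorem~\ref{thm:identity}, adjacent segments agree at the breakpoint, and $\|\cdot\|_1$ is continuous. Piecewise linearity follows because there are finitely many segments.

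For the sign of the slope I would do a block elimination. Write the bordered solve with $(-s_\F;0)$ as $H_{\F\F}\delta_\F+C_\F\trans\zeta=-s_\F$ and $C_\F\delta_\F=0$. Multiplying the first equation by $\delta_\F\trans$ and using the second gives
\[
s_\F\trans\delta_\F=-\delta_\F\trans H_{\F\F}\delta_\F\le0 .
\]
Equivalently, let $Z$ be a basis of $\ker C_\F$, so that $\delta_\F=-Z(Z\trans H_{\F\F}Z)^{-1}Z\trans s_\F$. Then
\[
c'=-s_\F\trans Z(Z\trans H_{\F\F}Z)^{-1}Z\trans s_\F .
\]
This is strictly negative exactly when the reduced Hessian on the tangent space is positive definite and $Z\trans s_\F\neq0$, that is, when $s_\F$ is not in the span of the active normals. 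This is also the condition of Corollary~\ref{cor:bijection}.

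In the penalty-only case $C_\F$ is empty, so $Z=I$ and $c'=-s_\F\trans(X_\F\trans X_\F)^{-1}s_\F<0$ whenever $\F\neq\emptyset$, with positive definiteness coming from general position. For the bijection onto $[0,\|\beta(0)\|_1)$ I would argue as follows. For $\lambda\ge\lambda_{\max}$ the support is empty, so $c=0$. On $(0,\lambda_{\max}]$ the support is nonempty on every segment, so $c$ is continuous and strictly decreasing there. Its limit as $\lambda\downarrow0$ is $\|\beta(0)\|_1$ by continuity of the path at $0$. The range is then the half-open interval stated by the intermediate value theorem.

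The step I expect to be the main obstacle is the behaviour at breakpoints and at the endpoint $\lambda\to0$. Sign changes happen only through a $\mathrm{P}_1$ event, when a coordinate passes through zero and leaves $\F$, so the identity $\sum|\beta_i|=s_\F\trans\beta_\F$ holds on each closed segment. The endpoint requires existence and uniqueness of $\beta(0)$, which in the $n>m$ regime must come from the general-position assumption (the minimum-$\ell_1$ interpolant limit). I would argue it by taking the limit of the final segment's affine formula rather than by solving at $\lambda=0$ directly.
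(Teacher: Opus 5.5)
Your proposal is correct and follows the paper's proof essentially step for step. The common steps are: $c$ is affine on each segment because $\|\beta\|_1=s_\F\trans\beta_\F+\text{const}$; continuity comes from continuity of the path; and the sign of $c'$ comes from the null-space (Schur complement) form $c'=-s_\F\trans Z(Z\trans H_{\F\F}Z)^{-1}Z\trans s_\F$. Two of your additions go slightly beyond the paper: the identity $s_\F\trans\delta_\F=-\delta_\F\trans H_{\F\F}\delta_\F$, obtained by pairing the first block row with $\delta_\F$, is a quicker route to $c'\le0$, and treating the $\lambda\downarrow0$ endpoint as a limit is care the paper leaves implicit.
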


\begin{proof}
On a segment $\F$ and $s_\F$ are fixed, so $\|\beta(\lambda)\|_1$ equals
$s_\F\trans(\alpha_\F+\lambda\delta_\F)$ plus the constant contributed by the blocked
coordinates, which is affine with the stated slope, and continuity across breakpoints
follows from continuity of $\beta(\lambda)$. Unconstrained, the segment stationarity
gives
\[
X_\F\trans X_\F\,\beta_\F = X_\F\trans y-\lambda s_\F ,
\qquad
\delta_\F=-(X_\F\trans X_\F)^{-1}s_\F ,
\]
and $s_\F\trans(X_\F\trans X_\F)^{-1}s_\F>0$ since $X_\F\trans X_\F\succ0$ and
$s_\F\neq0$ on a nonempty support; a strictly
negative derivative on every segment makes $c$ a strictly decreasing bijection onto its
range. In the constrained case, eliminating $\zeta$ from \eqref{eq:border} by its Schur
complement expresses the free block of the response to $(g;0)$ as
$Z(Z\trans H_{\F\F}Z)^{-1}Z\trans g$, with the columns of $Z$ a basis for the null space
of $C_\F$. Taking $g=-s_\F$ gives
$s_\F\trans\delta_\F=-\lVert(Z\trans H_{\F\F}Z)^{-1/2}Z\trans
s_\F\rVert_2^2\le0$, with equality only when $Z\trans s_\F=0$, that is when
$s_\F$ lies in the span of the active normals; otherwise it is strictly negative and
the bijection follows as before.
\end{proof}

Two riders on the statement. The blocked coordinates contribute a constant to
$\|\beta(\lambda)\|_1$ because \eqref{eq:lasso} carries a fixed box; where the bounds
travel with the parameter, as in \eqref{eq:nnls}, that contribution is affine instead
and its rate joins the slope. And under equality constraints $A\beta=b$ the support need
never empty, so no $\lambda_{\max}$ exists and the displayed bijection is the
unconstrained statement. When $s_\F$ is annihilated the segment is flat, and what is
lost then is more than the bijection.

\begin{proposition}[Flat segments break the identity of curves]
\label{prop:flat}
Suppose $c\equiv c_0$ on $[\lambda_1,\lambda_2]$ with
$\beta(\lambda_1)\neq\beta(\lambda_2)$. Then $\mathrm{M}_{c_0}$ has a continuum of
minimisers, containing the whole segment $\{\beta(\lambda):\lambda_1\le\lambda\le
\lambda_2\}$. Consequently a single-valued tracer of $\mathrm{M}_c$ returns one point
of that segment, and the path it traces omits the rest: the two paths are \emph{not} the
same set of points.
\end{proposition}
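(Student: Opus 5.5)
The plan is to work on $[\lambda_1,\lambda_2]$ and prove three things in order. First, every point of the segment is feasible for $\mathrm{M}_{c_0}$. Second, every such point is optimal there. Third, these points are pairwise distinct, so there is a continuum of them.

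\emph{Feasibility and optimality.} Theorem~\ref{thm:identity} already gives that $\beta(\lambda)$ solves $\mathrm{M}_{c(\lambda)}=\mathrm{M}_{c_0}$ for every $\lambda$ in the interval. That is the duality-free half of its proof: $\beta(\lambda)$ is feasible with $\|\beta(\lambda)\|_1=c_0$, and a strictly better feasible point would also be strictly better for $\mathrm{L}_\lambda$. So every $\beta(\lambda)$ with $\lambda\in[\lambda_1,\lambda_2]$ is a minimiser of $\mathrm{M}_{c_0}$. To make the structure explicit I would also note a second route. The solution set of a convex program is convex, so it contains the chord between $\beta(\lambda_1)$ and $\beta(\lambda_2)$, and hence a continuum as soon as these two points differ. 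The direct route gives more, namely that the entire path piece lies in the solution set, and this is what the statement asserts.

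\emph{Distinctness.} The path is continuous and $\beta(\lambda_1)\neq\beta(\lambda_2)$, so $\lambda\mapsto\beta(\lambda)$ is not constant on the interval, and its image is a connected set with more than one point. That image is therefore uncountable, which is the required continuum. In fact one can argue more sharply. The objective $\tfrac12\|y-X\beta\|^2$ is convex and takes the same value on the solution set. With $\Sig=X\trans X$ and $\mu=X\trans y$ it is strictly convex along any direction not in $\ker X$, so all minimisers share the same fitted value $X\beta$. Hence the segment moves only along $\ker X$. This is consistent with Proposition~\ref{prop:reg}: for the reduced Hessian on the constraint tangent space to be positive definite while $c$ stays flat, we need $Z\trans s_\F=0$, which forces $\delta_\F=0$ within a single segment. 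So the motion must come from several segments or from changes in the blocked coordinates. The theorem's hypotheses already rule out this situation, but the proposition concerns the case they exclude, and I only need the membership and distinctness facts above.

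\emph{The consequence.} A single-valued tracer of $\mathrm{M}_c$ is a map $c\mapsto w(c)$. At the one parameter value $c_0$ it outputs one point, while the whole segment $\{\beta(\lambda)\}$ sits at that same value. The points of the segment other than $w(c_0)$ could still appear in the tracer's path only if they were outputs at some other cap $c\neq c_0$. But any point in the traced path at cap $c$ is feasible for $\mathrm{M}_c$, and for $c<c_0$ it has norm at most $c<c_0$. So I need to say that the tracer's output has norm exactly $c$ on the relevant range, meaning the cap binds. This holds because for $c$ below the unconstrained optimum's norm the cap is active. In any case such a point cannot equal a $\beta(\lambda)$ of norm $c_0$.

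\emph{Main obstacle.} The delicate step is exactly this final one: showing the omitted points are not recovered at other values of $c$. I would handle it by the norm argument just given. For caps $c>c_0$ in the binding regime, the output has norm $c\ne c_0$. If the cap is slack, the output is the unique constrained least-squares solution, and it differs from segment points in its norm unless $c_0$ is the terminal value.
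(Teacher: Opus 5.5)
Your argument is correct and is essentially the paper's: the load-bearing step in both is that a strictly better point for $\mathrm{M}_{c_0}$ would have $\ell_1$ norm at most $c_0$ and so beat $\beta(\lambda)$ in $\mathrm{L}_\lambda$, applied at every $\lambda\in[\lambda_1,\lambda_2]$ and closed by connectedness of the segment. You skip the paper's preliminary check that the two endpoints attain the same residual sum of squares, which that step makes unnecessary.

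Of your two routes to the continuum, the convexity one is the one that actually carries it. Continuity of $\beta(\lambda)$ is only secured under hypotheses that, as you note, already exclude $\beta(\lambda_1)\neq\beta(\lambda_2)$. Drop the aside that the motion could come from several segments or from the blocked coordinates: $\delta_\F=0$ on every segment, fixed bounds and continuity would make $\beta$ constant.
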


\begin{proof}
Write $R(\beta)=\tfrac12\|y-X\beta\|_2^2$ and $\beta_i=\beta(\lambda_i)$. Optimality of
$\beta_1$ for $\mathrm{L}_{\lambda_1}$ gives
$R(\beta_1)+\lambda_1c_0\le R(\beta_2)+\lambda_1c_0$, hence $R(\beta_1)\le R(\beta_2)$;
the same argument at $\lambda_2$ gives the reverse, so $R(\beta_1)=R(\beta_2)$. Both are
feasible for $\mathrm{M}_{c_0}$, both attain the same objective, and neither can be
beaten there, since a strictly better point would have $\ell_1$ norm at most $c_0$ and
so would beat $\beta_1$ in $\mathrm{L}_{\lambda_1}$. The same holds at every intermediate
$\lambda$, and the segment is connected.
\end{proof}

So the flat case is not an axis relabelled non-injectively; it is a genuine divergence
of the two curves, and it is why Theorem~\ref{thm:identity} restricts to intervals of
strict decrease. Where it can happen is exactly where this note's contribution lies. If
$X\trans X\succ0$ the objective is strictly convex, $\mathrm{M}_c$ has a unique
minimiser, and no flat segment can occur, consistent with the strict decrease
Proposition~\ref{prop:reg} proves in the unconstrained case. Flat segments need a
constraint normal aligned with the support's signs, $Z\trans s_\F=0$, and so live in the
constrained, rank-deficient regime. Section~\ref{sec:conclusion} returns to them.

\begin{remark}[Uniqueness, and what it costs]
\label{rem:unique}
General position is a real hypothesis and not a courtesy. Under it the unconstrained
LASSO solution is unique at every $\lambda$ \citep{tibshirani2013}; past it the same
paper shows the solution set can be a face rather than a point, and no counterpart is
known for the constrained problem \eqref{eq:mark} (Remark~\ref{rem:rank}, and
Section~\ref{sec:conclusion} for what that leaves open). What
Theorem~\ref{thm:identity} adds is the linear-independence constraint qualification at
each visited active set, which is what makes the constrained path single-valued and its
breakpoint sequence finite.
\end{remark}

\subsection{Budget and tilt}
\label{sec:bridge}

Theorem~\ref{thm:identity} and the map above relate $\mathrm{L}_\lambda$ to
$\mathrm{M}_c$, and $\mathrm{M}_c$ holds the linear term fixed while the budget $c$
moves. The frontier of
Section~\ref{sec:frontier-primer} is the other one-parameter family: the budget is fixed
by the mandate and the \emph{tilt} $\lambda$ moves. It is also the sweep a
practitioner runs; mandates fix a leverage budget, they do not sweep it. These are not
the same family pointwise, and the bridge between them is a rescaling rather than an
identity of parameters. What that rescaling has to carry depends on whether the
constraints other than the cap are homogeneous. Where they are, it is a scalar, and the
second family is the first one read backwards.

\begin{corollary}[Fixed leverage, swept tilt]
\label{cor:tilt}
Let $\Sig=X\trans X$ and $\mu=X\trans y$, and let $b=0$, $h=0$, so that the feasible set
of $\mathrm{M}_c$ is the $\ell_1$ ball met with a cone. Fix a budget $c>0$ and consider
the tilted program
\begin{equation}
\label{eq:tiltprog}
\begin{aligned}
\mathrm{T}_\lambda^{c}:\quad \min_{w}\ &\tfrac12\, w\trans\Sig\,w-\lambda\,\mu\trans w\\
\text{s.t.}\quad &\|w\|_1\le c,\ \ Aw=0,\ \ Gw\le 0 .
\end{aligned}
\end{equation}
Then for every $\lambda>0$, $w$ solves $\mathrm{T}^c_\lambda$ if and only if
$\lambda^{-1}w$ solves $\mathrm{M}_{c/\lambda}$, which assumes nothing beyond the
substitution. Add the
hypotheses of Theorem~\ref{thm:identity}, so that both sides are single-valued, and
write $w^{\mathrm M}(c)$ for the path of $\mathrm{M}_c$ and $w^{\mathrm T}_c(\lambda)$
for that of $\mathrm{T}^c_\lambda$; then
$w^{\mathrm T}_c(\lambda)=\lambda\,w^{\mathrm M}(c/\lambda)$, so the tilted family and
the capped family are one curve up to positive rescaling, agreeing as rays through the
origin rather than as point sets. The one has a corner exactly where the other does,
and the breakpoints correspond under $\lambda\mapsto c/\lambda$. Equivalently, the two parameters enter
only through their ratio, up to the scalar $\lambda$: the efficient set over
$(c,\lambda)$ is the cone over one path in $\R^n$, and in (return, standard deviation,
gross exposure) coordinates the surface of Section~\ref{sec:frontier-primer} is the
cone over one frontier.
\end{corollary}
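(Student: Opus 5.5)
The first claim is a change of variables that uses only homogeneity. I would prove it before invoking any hypothesis of Theorem~\ref{thm:identity}. Fix $\lambda>0$ and substitute $w=\lambda v$. The constraints of $\mathrm{T}^c_\lambda$ read $\|\lambda v\|_1\le c$, $A\lambda v=0$ and $G\lambda v\le0$. Because $b=0$, $h=0$ and $\lambda>0$, these are equivalent to $\|v\|_1\le c/\lambda$, $Av=0$ and $Gv\le0$, which is exactly the feasible set of $\mathrm{M}_{c/\lambda}$. The objective becomes
\[
\tfrac12\lambda^2 v\trans\Sig v-\lambda^2\mu\trans v=\lambda^2\bigl(\tfrac12 v\trans\Sig v-\mu\trans v\bigr),
\]
a positive multiple of the objective of $\mathrm{M}_{c/\lambda}$. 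The map $v\mapsto\lambda v$ is therefore a bijection between the two feasible sets that multiplies the objective by the positive constant $\lambda^2$. It preserves minimisers exactly, so $w$ solves $\mathrm{T}^c_\lambda$ if and only if $\lambda^{-1}w$ solves $\mathrm{M}_{c/\lambda}$. This step does not even need the substitution $\Sig=X\trans X$.

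Next, the path statement. Add the hypotheses of Theorem~\ref{thm:identity}, restricted to the range of caps on which $c(\cdot)$ is strictly decreasing, so that $\mathrm{M}_{c'}$ has a unique minimiser $w^{\mathrm M}(c')$ for every cap $c'$ in play. The equivalence above then forces $\mathrm{T}^c_\lambda$ to have the unique minimiser $\lambda\,w^{\mathrm M}(c/\lambda)$. That gives $w^{\mathrm T}_c(\lambda)=\lambda\,w^{\mathrm M}(c/\lambda)$, so the two families are one curve up to positive rescaling.

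For the breakpoints, the plan is to note that $\phi(\lambda)=c/\lambda$ is a smooth, strictly decreasing bijection of $(0,\infty)$ onto itself, and to work with active sets. Scaling by $\lambda>0$ preserves signs, zero patterns, which homogeneous rows $g_j\trans w\le0$ are tight, and whether the $\ell_1$ facet is tight. Hence the active set of $\mathrm{T}^c_\lambda$ at $\lambda$ equals that of $\mathrm{M}_{c/\lambda}$, and active-set changes, that is breakpoints, correspond under $\lambda\mapsto c/\lambda$.

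The main obstacle is phrasing "corner" carefully. On a segment, $w^{\mathrm M}(c')=\alpha+c'\delta$ by Lemma~\ref{lem:affine}, applied as in the proof of Theorem~\ref{thm:identity}. Then $w^{\mathrm T}_c(\lambda)=\lambda\alpha+c\,\delta$, which is again affine in $\lambda$. So the tilted path is piecewise linear in its own parameter, and its kinks occur exactly at the images of the kinks of $w^{\mathrm M}$. I would argue that a genuine slope change of $w^{\mathrm M}$ at $c'_k$, a jump in $\delta$, produces a jump in the $\lambda$-slope $\alpha+\text{const}$ of $w^{\mathrm T}$. This follows from continuity: equating the two adjacent segments at $c'_k$ gives $\Delta\alpha=-c'_k\Delta\delta$, which is nonzero.

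Finally, the cone reading is immediate. Return is $\mu\trans w$ and standard deviation is $\sqrt{w\trans\Sig w}$; both are positively homogeneous of degree one, as is gross exposure $\|w\|_1$. Scaling a point of the base curve by $\lambda$ therefore scales its image in (return, standard deviation, gross exposure) coordinates by $\lambda$, so the efficient surface is the cone over one frontier.
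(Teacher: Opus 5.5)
Your proposal is correct and follows the paper's own proof. The paper also substitutes $w=\lambda v$, uses that the objective scales by $\lambda^2>0$ while the homogeneous right-hand sides are fixed by the scaling, lets Theorem~\ref{thm:identity} make each side a single point, and uses that positive scaling carries active sets to active sets, so breakpoints correspond under $\lambda\mapsto c/\lambda$. Your explicit kink check via $\Delta\alpha=-c'_k\Delta\delta$ (the $\lambda$-slope of $w^{\mathrm T}_c$ on a segment is simply $\alpha$) and your degree-one homogeneity argument for the cone in (return, standard deviation, gross exposure) spell out steps the paper leaves implicit.
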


\begin{proof}
Fix $\lambda>0$ and substitute $w=\lambda v$ in \eqref{eq:tiltprog}. The objective
becomes $\lambda^2\bigl(\tfrac12 v\trans\Sig v-\mu\trans v\bigr)$; the cap
$\|w\|_1\le c$ becomes $\|v\|_1\le c/\lambda$; and $Aw=0$, $Gw\le0$ become $Av=0$,
$Gv\le0$, the right-hand sides being fixed points of the scaling. The constraint data
is therefore that of $\mathrm{M}_{c/\lambda}$ exactly, and as $\lambda^2>0$ the
objectives differ by a positive factor, so the feasible sets and the orderings on them
correspond and the minimisers do too. The stated equivalence is that correspondence
read off; $w^{\mathrm T}_c(\lambda)=\lambda\,w^{\mathrm M}(c/\lambda)$ is the same
statement once
Theorem~\ref{thm:identity} makes each side a point, which the substitution does not
disturb, being a bijection of $\R^n$ carrying active sets to active sets.
\end{proof}

So the LASSO path already holds the fixed-leverage sweep: each vertex $\beta(\lambda)$
of $\mathrm{L}_\lambda$, rescaled to gross exposure $c$, is a corner of $\mathrm{T}^c$,
at tilt $c/\|\beta(\lambda)\|_1$. The sweep has no low-risk end. The cap goes slack
below some tilt whenever the cap-free program has a solution of gross exposure under
$c$, and below that tilt $\mathrm{T}^c_\lambda$ follows the ray through that solution
into the origin, where there is nothing for a homotopy to find. A
minimum-variance end needs a row that holds the portfolio invested, and such a row is
inhomogeneous, which is the case the corollary excludes and the rest of this section
treats. A budget $\mathbf 1\trans w=1$ becomes $\mathbf 1\trans v=1/\lambda$ under the
same substitution, a position limit $w\le\ub$ becomes $v\le\ub/\lambda$: the feasible
set travels with the parameter rather than standing still.
Proposition~\ref{prop:bridge} is where that is done.

\begin{proposition}[The frontier is a nonnegative LASSO path]
\label{prop:bridge}
Let $\Sig=X\trans X$ and $\mu=X\trans y$, let $0\le\lb\le\ub$ be a box for which
$\{w:\mathbf 1\trans w=1,\ \lb\le w\le\ub\}$ is nonempty, and let $w(\lambda)$ trace the
fully invested frontier, i.e.\ the minimiser of
$\tfrac12 w\trans\Sig w-\lambda\mu\trans w$ subject to $\mathbf 1\trans w=1$,
$\lb\le w\le\ub$. Assume that minimiser is unique at each $\lambda>0$, as it is when
$H_{\F\F}\succ0$ at every visited active set. Write $t=1/\lambda$ and let $v(t)$ solve
\begin{equation}
\label{eq:nnls}
\min_v\ \tfrac12\|Xv-y\|_2^2 \quad\text{s.t.}\quad
\mathbf 1\trans v=t,\ \ t\,\lb\le v\le t\,\ub .
\end{equation}
Then $w(\lambda)=v(t)/t$ for every $\lambda>0$. Since $\lb\ge0$ forces $v\ge0$ and hence
$\mathbf 1\trans v=\|v\|_1$, \eqref{eq:nnls} is the nonnegative LASSO in the
bound-indexed form of $\mathrm{M}_c$ with $c=t$, its budget row holding $\|v\|_1$
\emph{at} $t$ where $\mathrm{M}_c$ caps it. That is the same constraint wherever the
cap binds, and it binds along the whole fully invested frontier. Both its budget row and its box travel
with the parameter, which is why \eqref{eq:pqp} lets the constraint data be affine in
$\lambda$ and not merely the objective. The frontier and that path are therefore the same curve
\emph{up to the positive rescaling} $v\mapsto v/\|v\|_1$: they agree as rays through the
origin, not as point sets. As $\lambda\downarrow0$ the correspondence closes by a limit,
$v(t)/t\to w_{\mathrm{mv}}$, the global minimum-variance portfolio.
\end{proposition}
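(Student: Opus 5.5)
The plan is to mirror the substitution proof of Corollary~\ref{cor:tilt}, now with inhomogeneous constraint data. Fix $\lambda>0$, put $t=1/\lambda$, and substitute $w=v/t$, so $v=tw$, in the frontier program. The objective becomes
\[
\tfrac12 w\trans\Sig w-\lambda\mu\trans w=t^{-2}\bigl(\tfrac12 v\trans\Sig v-\mu\trans v\bigr),
\]
using $\lambda t=1$. By completing the square as in the proof of Theorem~\ref{thm:identity}, this equals $t^{-2}\bigl(\tfrac12\|Xv-y\|_2^2-\tfrac12\|y\|_2^2\bigr)$. The constraints transform as $\mathbf 1\trans w=1\iff\mathbf 1\trans v=t$ and $\lb\le w\le\ub\iff t\lb\le v\le t\ub$, since $t>0$. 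The map $w\mapsto tw$ is therefore a bijection between the two feasible sets. It rescales the objective by the positive factor $t^{-2}$ and shifts it by a constant, so it preserves the ordering of objective values on those sets. Minimisers correspond exactly, and the assumed uniqueness of $w(\lambda)$ transfers to $v(t)$. Hence $w(\lambda)=v(t)/t$ for every $\lambda>0$. This step is routine, and nonemptiness of the feasible set is the stated hypothesis.

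Next I would justify the nonnegative-LASSO reading. Because $\lb\ge0$ and $t>0$, every feasible $v$ satisfies $v\ge t\lb\ge0$, so $\|v\|_1=\mathbf 1\trans v=t$ holds identically on the feasible set. Program \eqref{eq:nnls} is therefore $\mathrm{M}_c$ with $c=t$, with the sign constraint $v\ge0$ and the box rows placed in $G$. The cap $\|v\|_1\le t$ is implied by the budget row, and it holds with equality at every feasible point, hence along the whole path. This is the sense in which the budget row holds the norm ``at'' $t$. I would then point out that both the budget right-hand side and the box move affinely in $t$, which is exactly the generality that \eqref{eq:pqp} allows. Consequently Lemma~\ref{lem:affine} applies to \eqref{eq:nnls} in the variable $t$. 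One caveat: the path is piecewise affine in $t$, not in $\lambda$. Since $v(t)=t\,w(1/t)$, corners correspond under $\lambda\leftrightarrow1/t$, and the ray statement $w=v/\|v\|_1$ follows from $\|v\|_1=t$.

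I expect the limit $\lambda\downarrow0$, equivalently $t\to\infty$, to be the main obstacle. Here the identity itself is no help, because $v(t)$ diverges and so I would work on the $w$ side directly. The family $w(\lambda)$ minimises $f_\lambda(w)=\tfrac12 w\trans\Sig w-\lambda\mu\trans w$ over a fixed compact polyhedron $P=\{\mathbf 1\trans w=1,\ \lb\le w\le\ub\}$, and $f_\lambda\to f_0$ uniformly on $P$. Every cluster point of $w(\lambda)$ therefore minimises $f_0$ on $P$. If that minimiser $w_{\mathrm{mv}}$ is unique, compactness gives convergence; uniqueness holds under the standing assumption, applied to the active set at $\lambda=0$. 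A cleaner route is Lemma~\ref{lem:affine}: the path has finitely many breakpoints, so $w(\lambda)=\alpha+\lambda\delta$ on an initial segment $(0,\lambda_1)$. Continuity then gives the limit $\alpha$, which is feasible and satisfies the KKT conditions at $\lambda=0$, so $\alpha=w_{\mathrm{mv}}$. Finally, $v(t)/t=w(1/t)\to w_{\mathrm{mv}}$.
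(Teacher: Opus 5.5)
Your proposal is correct and follows the paper's proof essentially step for step. It uses the same rescaling (your $v=tw$ is the paper's $w=\lambda v$), with the completed square turning the tilted objective into a positive multiple of $\tfrac12\|Xv-y\|_2^2$ plus a constant while the constraints travel to $\mathbf 1^\top v=t$, $t\ell\le v\le tu$. It also uses the same limit argument: uniform convergence of $f_\lambda$ to $f_0$ on the fixed compact polytope, together with uniqueness of the minimum-variance portfolio. Your second route through the initial affine segment $(0,\lambda_1)$ is a valid extra: it gives existence of the limit and its KKT optimality at $\lambda=0$ directly. Calling that limit \emph{the} $w_{\mathrm{mv}}$ still rests on the same uniqueness the paper invokes.
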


\begin{proof}
With $\Sig=X\trans X$ and $\mu=X\trans y$,
\[
\begin{aligned}
\tfrac12 w\trans\Sig w-\lambda\mu\trans w
&= \tfrac12\|Xw\|_2^2-\lambda\,\langle Xw,y\rangle\\
&= \tfrac12\|Xw-\lambda y\|_2^2-\tfrac{\lambda^2}{2}\|y\|_2^2 .
\end{aligned}
\]
Substituting $w=\lambda v$ multiplies the first term by $\lambda^2$, giving
$\lambda^2\big(\tfrac12\|Xv-y\|_2^2\big)$ up to the same additive constant, while
$\mathbf 1\trans w=1$ becomes $\mathbf 1\trans v=1/\lambda=t$ and $\lb\le w\le\ub$
becomes $t\,\lb\le v\le t\,\ub$. As $\lambda^2>0$ the minimisers correspond, so
$w(\lambda)=\lambda\,v(t)=v(t)/t$. For the limit, $w(\lambda)$ minimises
$\tfrac12 w\trans\Sig w-\lambda\mu\trans w$ over the fixed polytope
$\{\mathbf 1\trans w=1,\ \lb\le w\le\ub\}$, which is compact, and on a compact set
that objective converges uniformly to $\tfrac12 w\trans\Sig w$ as $\lambda\downarrow0$.
Uniform convergence of continuous objectives on a compact set carries minimisers to
minimisers of the limit, unique here by the standing assumption, so
$w(\lambda)\to w_{\mathrm{mv}}$.
\end{proof}

Three readings follow. First, the frontier's parameter is a budget in disguise:
the gross-exposure ball is indeed vacuous at a \emph{fixed} budget of $1$, and
Proposition~\ref{prop:bridge} says what replaces it, so
the long-only frontier is not a member of the family for which the penalty has nothing
to do. Second, nothing on the left of \eqref{eq:nnls} is fixed, since the substitution
divides through by $\lambda$ and takes the constraints with it; had we fixed the box,
the proposition would hold only where $\ub$ is slack, and the Critical Line Algorithm's
reason for being is precisely the binding case. Third, the correspondence is radial
rather than pointwise. Breakpoint counts, the order of events and which coordinates are
free at each corner are invariant under $v\mapsto v/\|v\|_1$ and transfer verbatim;
anything metric in $\R^n$ does not, and the minimum-variance end is a limit rather than
a point of the budget-indexed path, since $\lambda=0$ is $t=\infty$. That is the sense,
and the only sense, in which the $21$ corners of Figure~\ref{fig:frontier} are LASSO
breakpoints.

\paragraph{Bound versus penalty}
\label{par:hinge}
In $\mathrm{M}_c$ sparsity is governed by an \emph{explicit} constraint, the
gross-exposure ball; in $\mathrm{L}_\lambda$ it is \emph{induced} by a penalty through
its subgradient. They coincide except in one place, where the explicit one goes vacuous.
For a long-only, fully invested portfolio ($w\ge0$, $\mathbf 1\trans w=1$) we have
$\|w\|_1=\mathbf 1\trans w=1$ at every feasible $w$, so $\mathrm{M}_c$ is infeasible for
$c<1$, holds with equality but redundantly at $c=1$, and is slack for $c>1$: the cap
never shapes the solution, and the identity has nothing to bite on. Short positions
under a binding cap $c>1$ make it active, and $\mathrm{M}_c$ and $\mathrm{L}_\lambda$
coincide as in Theorem~\ref{thm:identity}. What governs the long-only case instead is
Proposition~\ref{prop:bridge}, where the budget moves with the tilt.

\begin{remark}[What is prior art, and what is ours]
\label{rem:precedents}
It helps to separate the claims that could be made here, because they are not equally
available.

\emph{That the family is one family} is prior art, and more squarely than is comfortable.
\citet{gartner2012} name the LASSO, least angle regression, the support vector machine,
model predictive control and mean--variance portfolio selection in one sentence as
instances of a single parametric quadratic program, and state in the next the affine-data
rule of Section~\ref{sec:scheme}, citing \citet{ritter1962}. We claim no part of it.
\citet{gartner2009}
give a generic path method for the same program, applied to conjoint analysis rather than
to a portfolio, and \citet{mairal2012} repeat the observation.

The limit is one of route rather than of care. All three arrive through the statistical
path literature, \citet{gartner2009} building on the support-vector-machine path of
\citet{hastie2004svm} and \citet{mairal2012} on \citet{osborne2000} and
\citet{efron2004}; coming that way, mean--variance selection is met as a problem class
the framework covers, and \citet{markowitz1952}, which poses it, is the natural
citation. The 1956 algorithm lies off that path, and nothing in the question they were
asking would have sent them to it. So when \citet{mairal2012} cite Markowitz beside
\citet{osborne2000} and \citet{efron2004} for the piecewise linearity of the path, the
instinct is exactly right and the reference points at the paper that poses the problem
rather than the one that solves it; and what \citet{gartner2012} state is the accurate
and weaker thing, that the mean--variance \emph{problem} is a parametric quadratic
program.

\emph{That the constrained path can be traced} is prior art too. \citet{gaines2018} give
the constrained-LASSO path under general linear equalities and inequalities, by an
active-set homotopy on the same KKT system. So is the portfolio reading of an $\ell_1$
constraint: \citet{brodie2009} recognised the Markowitz problem as a LASSO, and a
literature around it reads the constraint statistically, \citet{fan2012} treating the
gross-exposure-constrained frontier in the vocabulary this section borrows. What that
literature establishes is what the constraint \emph{does}; what it does not do is trace
the frontier as a path.

\emph{That the two paths are one curve} is what we add, and the division is clean. Prior
work establishes the common parametric-QP framework, of which we claim no part; what
follows here is an equivalence of the two solution \emph{paths}.
Theorem~\ref{thm:identity} shows that Markowitz's Critical Line Algorithm is an instance
of the same active-set homotopy and that, under the substitution, its path coincides with
the constrained LASSO path, under arbitrary linear equality and inequality constraints.
Propositions~\ref{prop:reg} and~\ref{prop:flat} pin down the parameter map
$c(\lambda)=\|\beta(\lambda)\|_1$, match the breakpoints one-to-one, and give the single
way that correspondence fails. Proposition~\ref{prop:bridge} supplies the rescaling the
frontier needs, its parameter being a tilt and not a budget. These are claims about two
procedures and their output; the prior art makes claims about two problem classes, which
is a coarser thing and does not imply them. \citet{brodie2009} swept a leverage penalty
at fixed return rather than tracing the risk--return frontier, so their curve and the
frontier meet only at the zero-penalty end, and \citet{gaines2018} never read the
constrained path as a portfolio. The unconstrained slice is the classical LARS/LASSO
correspondence \citep{efron2004}, and piecewise linearity of a quadratic loss under a
polyhedral penalty is classical \citep{rosset2007}.

\emph{What the identity does not carry} is ours as well, and we would not trade it for
the theorem. Section~\ref{sec:inference} says why the inferential half does not travel,
and then crosses that boundary deliberately. None of the precedents above had occasion to
raise the question, because none of them paired the two problems.

\end{remark}

\section{What the identity does not carry, and one deliberate crossing}
\label{sec:inference}
The \emph{objects} a statistician reads along the LASSO path all have portfolio names,
while the \emph{inference} does not come free. The reason is that the correspondence
between the two parametrisations, $c(\lambda)=\|\beta(\lambda)\|_1$, is computed from the
data. Statements about the curve transfer, because the curve does not know which name it
is called by; statements that average over the response do not, because holding $\lambda$
fixed and holding $c$ fixed are then different experiments. The identity is one of
curves, not of statistical experiments. Degrees of freedom are the clearest case, and the one where the
boundary can be crossed rather than merely reported. The active support is exactly the
set of assets held away from a bound, so as a \emph{descriptive} complexity measure the
count transfers at once, and the portfolio literature had no occasion to define it. The
\emph{distributional} statement does not follow from the penalised one. That $|\F|$ is
unbiased for the degrees of freedom is a theorem about the estimator at fixed $\lambda$
\citep{zou2007df, tibshirani2012df}, the second reaching it by the same Stein route we
use, whereas the frontier is indexed by $c$ and $c(\lambda)$ depends on $y$. This is an
extension of that accounting, not a separate discovery. \citet{kato2009} settles the
$\ell_1$-\emph{constrained} problem, which is the indexing the frontier needs, but
without further constraints. What all three leave open is the case with equalities
and inequalities alongside the ball, which is the case a portfolio mandate always
presents and the one \eqref{eq:mark} was written for. Proposition~\ref{prop:df} supplies it, and the
bordered system of Section~\ref{sec:scheme} is why it costs nothing more: the same Schur
complement that gives the segment gives the divergence.

The geometry is simpler than the algebra, and worth having first. Whatever the
constraints, $\hat y$ is the Euclidean projection of $y$ onto
$K=\{X\beta:\beta\ \text{feasible}\}$, a closed convex polyhedron: minimising
$\|y-X\beta\|_2$ over a polyhedron in $\beta$ is exactly projecting $y$ onto its image.
A projection onto a polyhedron is, locally, the projection onto whichever face carries
it, so its divergence is the dimension of that face. Counting that dimension is the whole
of what follows: the support contributes $|\F|$ directions, and each independent active
row (the ball's, an equality's, a binding inequality's) removes one. Where the
three results above differ from ours is only in how many rows there are to remove.

\begin{proposition}[Degrees of freedom under general constraints]
\label{prop:df}
Let $y\sim N(\mu,\sigma^2 I_m)$ and let $\hat\beta$ solve $\mathrm{M}_c$ of
\eqref{eq:mark} with $X$, $c$ and the constraint data $A,b,G,h$ fixed; write
$\hat y=X\hat\beta$. Suppose that for almost every $y$ the minimiser is unique, the
active set is locally constant, and $X_\F$ has full column rank on the support
$\F$. Collect the active constraint rows on the support,
\[
M_\F=\begin{bmatrix} s_\F\trans\\ A_\F\\ G_{\Sset\F}\end{bmatrix},
\]
the first row present only when $\|\hat\beta\|_1=c$. Then $\hat y$ is almost
differentiable, $\partial\hat y/\partial y$ is the orthogonal projection onto the column
space of $X_\F Z_M$, the columns of $Z_M$ a basis for $\ker M_\F$ (the construction of
Proposition~\ref{prop:reg} with the ball's sign row adjoined to the active normals), and
\begin{equation}
\label{eq:df}
\mathrm{df}(\hat y)\;:=\;\frac{1}{\sigma^2}\sum_{i=1}^m\operatorname{Cov}(\hat y_i,y_i)
\;=\;\mathbb E\big[\,|\F|-\operatorname{rank}M_\F\,\big].
\end{equation}
\end{proposition}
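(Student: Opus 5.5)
The argument runs through Stein's lemma, so the work is to show that $\hat y$ is almost differentiable and to compute its divergence. For Gaussian $y$ and $\hat y$ Lipschitz (hence almost differentiable, with an integrable weak gradient), $\sum_i\operatorname{Cov}(\hat y_i,y_i)=\sigma^2\,\mathbb E[\nabla\cdot\hat y]$. Lipschitz continuity comes for free from the geometry the paper sets out just before the statement. Since $\hat y$ is the Euclidean projection of $y$ onto the closed convex polyhedron $K=\{X\beta:\beta\ \text{feasible}\}$, the map $y\mapsto\hat y$ is $1$-Lipschitz. It is also well defined even where $\hat\beta$ is not unique, because $X\hat\beta$ is unique as a projection. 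Stein's lemma therefore applies, and the claim reduces to computing $\nabla\cdot\hat y=\operatorname{tr}(\partial\hat y/\partial y)$ at almost every $y$.

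Next, fix a $y$ in the full-measure set where the minimiser is unique and the active set $(\F,s_\F,\Sset$, ball active or not$)$ is locally constant. On a neighbourhood, $\hat\beta_\F$ then solves the equality-constrained least-squares problem
\[
\min_{\beta_\F}\tfrac12\|y-X_\F\beta_\F-X_\B\beta_\B\|_2^2\quad\text{s.t.}\quad M_\F\beta_\F=r ,
\]
where $r$ is fixed: it collects $c$ (using $\|\beta\|_1=s_\F\trans\beta_\F+$ const), $b-A_\B\beta_\B$ and $h_\Sset-G_{\Sset\B}\beta_\B$. The blocked coordinates $\beta_\B$ stay at their bounds or at zero and do not move. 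This is exactly the bordered system \eqref{eq:border} with $C_\F$ replaced by $M_\F$ and $H_{\F\F}=X_\F\trans X_\F$. Write $\beta_\F=\beta^0+Z_M\gamma$ with $\beta^0$ a fixed particular solution. The problem becomes an unconstrained least-squares fit in $\gamma$, giving
\[
\hat y=\text{const}+P\,y,\qquad P=X_\F Z_M(Z_M\trans X_\F\trans X_\F Z_M)^{-1}Z_M\trans X_\F\trans .
\]
The inverse exists because $X_\F$ has full column rank and $Z_M$ has full column rank. This is the same Schur-complement elimination used in the proof of Proposition~\ref{prop:reg}. Locally, then, $\partial\hat y/\partial y=P$, the orthogonal projection onto $\operatorname{col}(X_\F Z_M)$.

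The trace of $P$ is $\operatorname{rank}(X_\F Z_M)=\dim\ker M_\F$, since $X_\F$ is injective, and this equals $|\F|-\operatorname{rank}M_\F$. Taking expectations gives \eqref{eq:df}.

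The main obstacle is the local-constancy step. One must check that the locally fitted affine map really is $\hat y$ on a whole neighbourhood, not merely at the point, so that the derivative is the one computed. The stated hypotheses deliver this (a unique minimiser with a locally constant active set), and I would use them directly. As a sanity check I would note the usual argument behind the hypothesis: the set of $y$ where the active set changes lies in a finite union of lower-dimensional sets, namely boundaries where some multiplier or slack vanishes, and it is therefore Lebesgue-null. The almost-everywhere derivative of the Lipschitz map then coincides with the piecewise formula above.

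A secondary point is to confirm that rows of $M_\F$ which are linearly dependent on the support cause no trouble. They do not, because only $\ker M_\F$ enters the computation, which is why the statement has $\operatorname{rank}M_\F$ rather than the row count.
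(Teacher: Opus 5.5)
Your proof is correct and follows the paper's own route. Stein's identity is justified by the $1$-Lipschitz projection of $y$ onto the polyhedron $\{X\beta:\beta\ \text{feasible}\}$, and the local equality-constrained least-squares fit on the support, with $M_\F$ in the border, has derivative equal to the projection onto $\operatorname{col}(X_\F Z_M)$, whose trace is $|\F|-\operatorname{rank}M_\F$; your substitution $\beta_\F=\beta^0+Z_M\gamma$ is the elimination the paper performs by Schur complement, and your check that the affine formula holds on a whole neighbourhood makes explicit a step the paper leaves to the hypotheses.
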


\begin{proof}
On the support the KKT conditions of $\mathrm{M}_c$ read
$X_\F\trans(X_\F\hat\beta_\F-y)+M_\F\trans\pi=0$ together with
$M_\F\hat\beta_\F=(c;\,b;\,h_\Sset)$, where $\pi$ stacks the multipliers of the
ball, the equalities and the active inequalities. This is \eqref{eq:border} with
$M_\F$ in the border, and only its top right-hand side depends on $y$, through
$X_\F\trans y$. Differentiating and eliminating $\pi$ by the Schur complement, exactly
as in the proof of Proposition~\ref{prop:reg}, gives
\[
\partial\hat\beta_\F/\partial y
  = Z_M\big(Z_M\trans X_\F\trans X_\F Z_M\big)^{-1}Z_M\trans X_\F\trans .
\]
Writing $W=X_\F Z_M$, the fit therefore obeys
\[
\partial\hat y/\partial y = W(W\trans W)^{-1}W\trans ,
\]
the projection onto $\operatorname{col}W$, whose trace is
$\operatorname{rank}W$. Since $X_\F$ has full
column rank, $\operatorname{rank}W=\dim\ker M_\F=|\F|-\operatorname{rank}M_\F$.
The map $y\mapsto\hat y$ is the Euclidean projection of $y$ onto the closed convex
polyhedron $\{X\beta:\beta\ \text{feasible}\}$ and so is $1$-Lipschitz, hence almost
differentiable with bounded derivative; Stein's identity therefore applies and
$\mathrm{df}(\hat y)=\mathbb E[\operatorname{div}\hat y]$, which is the stated
expectation.
\end{proof}

\begin{figure}[t]
\centering
\includegraphics[width=0.98\columnwidth]{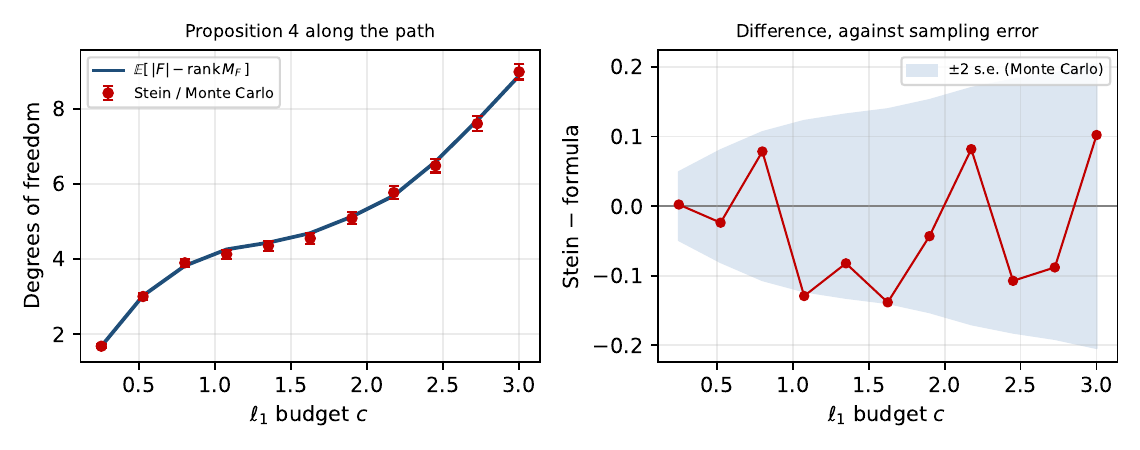}
\caption{Proposition~\ref{prop:df} checked along the path. \emph{Left:} degrees of
freedom at eleven budgets, by $\mathbb E[|\F|-\operatorname{rank}M_\F]$ (line) and by
Monte-Carlo estimation of Stein's $\sigma^{-2}\sum_i\operatorname{Cov}(\hat y_i,y_i)$
over $1500$ draws (points, two-standard-error bars); \emph{right:} their difference,
against the same band. The ball and an inequality are both active over most of the
range, so the count is neither $|\F|$ nor $|\F|-1$.}
\label{fig:df}
\end{figure}

\begin{remark}[Which hypotheses matter]
Three of the four are conveniences of the accounting rather than of the result.
Uniqueness of $\hat\beta$ is not needed for $\hat y$, which is a projection onto a
convex set and so is unique always; it is needed only so that $\F$ and $M_\F$ are well
defined. Local constancy of the active set fails on a set of $y$ of measure zero, which
is where almost differentiability already tolerates failure. Full column rank of $X_\F$
is what turns $\operatorname{rank}(X_\F Z)$ into $|\F|-\operatorname{rank}M_\F$;
without it the divergence is still $\operatorname{rank}(X_\F Z)$, but the count loses
its reading as holdings less binding constraints. What is genuinely necessary is the
Lipschitz property, and that is free: a projection onto a convex set is $1$-Lipschitz
whatever the data.
\end{remark}

Three readings, in increasing order of interest. With $A$ and $G$ absent and the ball
active, $\operatorname{rank}M_\F=1$ and \eqref{eq:df} recovers Kato's
$\mathrm{df}=\mathbb E|\F|-1$: the constrained form spends one degree of freedom on the
budget that the penalised form does not. With the ball slack and no other constraints it
is $\mathbb E|\F|$, least squares on the selected support. And for the long-only, fully invested frontier the support carries $s_\F=
\mathbf 1$, so the ball row and the budget row $\mathbf 1\trans$ \emph{coincide},
$\operatorname{rank}M_\F=1$ however tight the mandate, and
\[
\mathrm{df}=\mathbb E\big[\,\#\{\text{holdings away from a bound}\}\,\big]-1 .
\]
A frontier portfolio's effective dimension is its holding count, less one for the budget.
That is the statement the count of Section~\ref{sec:utility} was waiting for, and it is
the one transfer in this note that runs from statistics into portfolio selection and
arrives with a theorem rather than an analogy. Equation~\eqref{eq:df} holds not at a point but along the
path, and Figure~\ref{fig:df} checks it there. Both sides are estimated at eleven
budgets on an $n>m$ design: $m=20$, $n=30$, a three-term truth, $\sigma=0.5$, and one
inequality row $\mathbf 1\trans\beta\le0.6$ chosen so that the ball and the inequality
both bind, which is the general-constraint case \citet{kato2009} does not cover. The
degrees of freedom climb from $1.7$ to $8.9$ as the budget opens, and the count tracks
the Monte-Carlo estimate the whole way: the mean discrepancy is $0.03$ against a standard
error of $0.02$, and ten of the eleven budgets agree within two standard errors. The fit
is solved afresh by a conic solver at every draw, so the two sides share no machinery. Post-selection inference divides the
same way. The polyhedron that \citet{lee2016} and \citet{tibshirani2016psi} condition on
is, under the identity, the set of data for which a given corner portfolio is optimal, so
the geometry transfers; the conditioning event does not, since a fixed $\lambda$ and a
fixed gross-exposure cap are different experiments.

So the identity is exact, and what it carries is the curve: its corners, the order they
come in, the algorithm that finds them. That is less than one might hope for and more
than it sounds, because a curve is what a path-following method computes, and the methods
that compute one are a larger family than the two programs paired here, which
\citet{gartner2012} and \citet{gartner2009} lay out and we take as given.

\section{History}
\label{sec:history}

The device was invented three times over fifty years, in four literatures, and the
arrivals had almost nothing to do with one another.\footnote{This section is a
sketch, not a survey; each of the four literatures could fill an article of its own.}
Optimization and finance
built it together in the 1950s; statistics reached it by a different road four decades
later, asking a different question; control found it again in the 2000s, and only there
did anyone arrive already knowing where the others had been.

\subsection{The Critical Line Algorithm}
\label{sec:cla-history}

The active-set treatment of the parametric quadratic program is a creation of the
1950s, and the Critical Line Algorithm is its earliest fully worked
instance. Every ingredient was to
hand in that decade: Dantzig's simplex method \citep{dantzig1951}, whose vertex-to-vertex
walk the homotopy still rides; the optimality conditions of \citet{kuhntucker1951},
anticipated by \citet{karush1939}, whose parametric solution is the bordered KKT system of
Section~\ref{sec:scheme}; the \emph{parametric} linear programming of
\citet{gasssaaty1955}, which follows optimal vertices as a cost coefficient varies and
becomes the critical line once the objective is quadratic; and the general
quadratic-programming algorithms of \citet{frankwolfe1956}, \citet{beale1959}, and
\citet{wolfe1959}, several appearing alongside Markowitz's in the \emph{Naval Research
Logistics Quarterly}. Markowitz worked alongside Dantzig at the RAND Corporation and took
the simplex method as his starting point: having posed mean--variance selection
\citep{markowitz1952}, he recognised that the efficient set is obtained by solving a
quadratic program \emph{parametrically} in the risk--return trade-off, and
\citet{markowitz1956} gave the Critical Line Algorithm, circulated as a RAND
memorandum in February 1955 \citep{markowitz1955} and developed in full in the 1959
monograph \citep{markowitz1959}. The method is an instance of the homotopy of
Section~\ref{sec:scheme}: trace the critical line, changing the status of one asset at
each turning point, contemporary with and independent of the general theory that would
later be seen to contain it. That theory, which is Lemma~\ref{lem:affine}, arrived six years afterwards in a German
operations-research journal \citep{ritter1962}, and reached the journal that had carried
Markowitz only in 1967, in M.~Meyer's English translation, eleven years
after Markowitz's paper. Markowitz had the special case first, and Ritter's general
case, when it came, did not mention him. Inside optimization the debt was acknowledged.
\citet{wolfe1959} took the critical line method as his starting point; Markowitz, in
Appendix~B of the 1959 monograph, held the two procedures equivalent (started
together, they continue together), and \citet{cottle2010} supply the proof, as
principal pivoting on a parametric linear complementarity problem. The algorithm has been
paired with a general method before, then, and rigorously; what it was never paired with
is the homotopy statistics would build. The parametric line runs on from there to
\citet{best1996}, who sets it out in the form closest to Section~\ref{sec:scheme}, in a
volume honouring Ritter.

\paragraph{What is already in RM-1438} The memorandum repays reading rather than
citation, and citation is mostly what it got: \citet{cottle2010} find it in two
bibliographies of the period and in the text of neither. It poses the problem with equalities, general inequalities and nonnegativity
rather than a budget alone; it assumes the quadratic form only positive
\emph{semi}-definite, with strict convexity on the equality-constrained set, which is
the reduced-Hessian condition of Section~\ref{sec:scheme}; it proves the bordered matrix
nonsingular by showing its kernel meets the constraint null space only at zero; it writes
the segment in closed form as $X_j=a_j+b_j\lambda_E$, which is \eqref{eq:affine}; and it
finds the next breakpoint as the largest $\lambda_E$ below the current one at which any
of four quantities vanishes: a free variable, a slack, an inequality multiplier, or a
blocked variable's reduced cost, which are
$\mathrm{P}_1,\mathrm{P}_2,\mathrm{D}_2,\mathrm{D}_1$ of Section~\ref{sec:events}. It
updates the inverse by one row and column instead of refactorising, citing
R.~A.~Fisher's sweep and the simplex method in one breath. It perturbs $\Sig$ to
$\Sig+eI$ where strict convexity fails, and bounds what that costs. And it states the
degeneracy problem of Section~\ref{sec:conclusion}, and says plainly that it has no
method for it. Little of Section~\ref{sec:scheme} is absent from this document.

It also records a contemporary arrival the later literature forgot. Markowitz points to a
problem ``of very similar structure analyzed independently'' by \citet{houthakker1953},
on expenditure under a quadratic utility, where the parameter enters a budget row rather
than the objective, and to \citet{dorfman1951} on a monopolist's quadratic profit.
Independent arrival at this structure begins, then, at the beginning.

Two threads then ran from this beginning for half a century, both inside finance. On the
modelling side, \citet{sharpe1963}, again at Markowitz's suggestion, introduced the
single-index model, writing the covariance as a diagonal matrix plus a rank-one term;
modern multi-factor risk models generalise this diagonal-plus-low-rank structure, and it
is precisely what treating $H$ as an operator (Section~\ref{sec:scheme}) exploits. On the
algorithmic side, the CLA was re-derived, refined, and scaled across decades, almost
always from within portfolio theory
\citep{perold1984,niedermayer2010,markowitz2021}.
The lineage is unbroken; what it never asked was whether the algorithm belonged to a
family larger than portfolio selection.

The method never had a wider breakthrough, and the reasons differ by era.
\citet{cottle2010} put the early one down to presentation: the phrase ``quadratic
programming'' appears in Markowitz's paper once, in a footnote, and eleven of its twelve
sections are about efficient points rather than about minimising a quadratic. The later
reason is that the method was easy to do without: a generic
solver dispatches a single efficient point in milliseconds, and one approximates the
frontier by re-solving as the return tilt varies. If finance itself overlooked the computational contribution, we can hardly fault statistics
for arriving at the same homotopy without reference to it.

\subsection{How statistics came to it independently}
\label{sec:stats-history}

Statistics arrived at the same device around forty years later, by a different door,
and at first without reference to any of this. The door was variable selection and
shrinkage, not optimization: the question was not how to trace a frontier but how the
LASSO of \citet{tibshirani1996} chooses, as its penalty relaxes, which coefficients to
admit, and the discovery was that it does so along a piecewise-linear path.
\citet{osborne2000} gave the homotopy explicitly; \citet{efron2004} gave least angle
regression, read the path's corners as the events of a single forward procedure, and
made the piecewise-linear path famous well beyond the LASSO itself.
The construction was then carried to neighbouring losses and penalties, and to
generalized linear models, where the path is no longer piecewise linear and has to be
followed approximately, which is the natural boundary of the exact scheme. \citet{hastie2004svm}
carried the same construction to the support vector machine, tracing its solution as the
cost parameter varies, and \citet{tibshirani2011} extended it to the generalized lasso,
with the fused lasso and trend filtering among its special cases, by a dual path of the
same kind. What had been an optimization fact about parametric quadratic programs was
being rebuilt, theorem by theorem, as a native part of regularised estimation.

The development reached even the governing principle, \citet{rosset2007} restating in the
field's own vocabulary the criterion optimization had carried since the 1950s. None of it
leaned on that lineage, and there was little reason it should have: the statistical
versions answered questions about selection, degrees of freedom and prediction that the
portfolio literature had never posed.

The bridge back was late and partial. \citet{brodie2009} and \citet{gaines2018} each
recovered a piece of it, as the introduction sets out, and each recovered something the
memorandum already had (the four event families and the ridge alike), which is no
reflection on work done independently seventy years later. Even so, the
observation travelled as a curiosity rather than as the statement that two large
algorithmic literatures describe one structure.

The same device was then found once more, in control, and this time the crossing was
partly deliberate. Through the 2000s the explicit solution of the constrained
linear-quadratic regulator was worked out by \citet{bemporad2002} and \citet{tondel2003},
in the multi-parameter form Remark~\ref{rem:mpc} describes; \citet{roll2008} then made
the link to the scalar case plain, explicitly generalising the LARS/LASSO construction.
Control is the lone instance in this story of a field that reached the device knowing the
others had been there.

\subsection{One structure, four disconnected literatures}
\label{sec:disconnect}

What is striking is not that the structure existed, since optimization has had the
general language for it all along, nor even that it went unnoticed, since it did not:
\citet{gartner2012}, \citet{gartner2009} and \citet{mairal2012} all record the lineage.
Their reach stops at the problem, though, and Remark~\ref{rem:precedents} says exactly
where. What is striking is how little followed. Four versions are still developed and
taught in parallel, each with its own vocabulary and its own canonical references, and a
practitioner of one can work a career without meeting the others. That is the gap this
paper is addressed to, and it is a gap in the cashing out, not in the noticing.

\section{What the whole path shows}
\label{sec:utility}

That the path is exact is not a numerical nicety. Gridding the penalty and interpolating
between grid points cuts the very corners, the breakpoints, that carry the
model-selection information; coordinate-descent and proximal solvers on a
$\lambda$-grid approximate the path in exactly this way, whereas the homotopy returns
the corners themselves. Every agreement this note claims (with conic and active-set
solvers, with \texttt{lars\_path}, and with the Stein estimate of \eqref{eq:df}) is
checked against an independent reference and reproduced by the source repository. What
the corners are then good
for is the rest of this section, and two of those readouts are objects every statistician
already reads, wearing portfolio names.

\emph{The weight profile.} Plot each weight as the frontier is traced and, under
$\Sig=X\trans X$, what appears \emph{is} the LARS/LASSO coefficient profile: the
picture statistics reads to see which variables a model admits and in what order, the
same lines relabelled. Read as a portfolio it is a membership and turnover map of the
efficient set, each position entering, growing and leaving, the turning points being the
entry and exit events.

\emph{The number of holdings.} Counting the positions held away from a bound at each
point of the frontier gives the portfolio's size as a function of risk, which by the
identity is the LASSO's support size. It falls in the familiar knee shape: many
holdings shed for little added risk, then a long flat tail of one or two concentrated
names. Because the count is the support size, that
corner is recognisably a \emph{model-selection} elbow, and
Proposition~\ref{prop:df} is what lets one do more than recognise it. Since
$|\F|-\operatorname{rank}M_\F$ is unbiased for the degrees of freedom of the
\emph{bound-indexed} fit, it can be read off at each turning point and substituted into
an information criterion: $\|y-\hat y\|_2^2+2\sigma^2(|\F|-\operatorname{rank}M_\F)$
scores every corner of the frontier for the price of the trace itself, with no
resampling, and on the long-only frontier the correction is simply the holdings less
one. Two honest limits. The criterion still needs $\sigma^2$ from outside; and because
the corners were located using $y$, it estimates prediction error rather than furnishing
a post-selection guarantee, which is the distinction Section~\ref{sec:inference} draws.
Independently of any criterion, an investor can read off, at
any target volatility, how many names the optimum will hold, or invert the reading and
ask what risk a given cardinality buys.

Neither readout needs the whole frontier: the band one cares about can be traced alone,
paying only for the turning points it contains.

\section{Conclusion}
\label{sec:conclusion}

That the parametric active-set homotopy is a computational primitive in its own right,
rather than an appendage of whichever model it serves, has been said before and we have
taken it as given. What we have tried to supply is the part that does not follow from
saying it: an exact statement of when two of these procedures trace the same curve, and
an account of what that does and does not license.

The dividend is transfer, and it is narrower than it looks. What moves freely are
statements about the curve (a uniqueness theorem, a path-length bound, the order in
which coordinates enter) and the software with them, since an implementation is a
statement about the curve too. What does not move is anything attached to a fixed value
of the tuning parameter: a post-selection interval, and in general any quantity averaged
over the response (Section~\ref{sec:inference}), which is much of what a statistician
wants. Degrees of freedom were the exception worth pursuing, and
Proposition~\ref{prop:df} recomputes them for the bound-indexed fit under general
constraints rather than transporting them from the penalised case, which is what a
deliberate crossing of that boundary costs. Reporting where the boundary runs, and
crossing it once, is as much of the contribution as the identity itself.

Two problems are left open, and both belong to the shared object rather than to either
instance. The first is the length of the path: near-linear in practice, exponential in
the worst case \citep{mairal2012}, with no tie-break rule known that keeps the corners
exact and carries a guarantee. The second is rank deficiency, where the standing
assumption bites (Remark~\ref{rem:rank}) and the literature divides between perturbing
the problem and selecting a particular minimiser. The perturbation is older than its
modern users suppose, \citet{markowitz1955} adding $eI$ and bounding what it costs. The
selection, at least, is settled by the structure rather than by any algorithm: the
constrained minimisers form a polytope on which both the fit and the $\ell_1$ norm are
constant, so the minimum-$\ell_2$-norm choice exists and is unique at every $\lambda$.
Whether any homotopy traces it is the most interesting question we leave open.

The primitive was Markowitz's first, in a 1956 paper cited far less than the 1952 one
it followed, and
statistics arrived at it again, by an entirely different road, half a century later. That
an idea is reached more than once, independently, is less an oversight to correct than a
measure of how good it is. It has been named already, more than once. Showing what two of
its instances share, exactly, and handing over an engine that traces both, is the tribute
this note means to pay.

\appendix
\section{The ratio tests, the LARS step, and the loop}
\label{sec:ratios}

Section~\ref{sec:events} names the four events and says what each does to the active
set. The formulas are collected here, together with the specialisation that identifies
$\mathrm{D}_1$ with the entry step of least angle regression and the loop the two
together define.

\paragraph{The bordered solve} The coefficient matrix of \eqref{eq:border} is
nonsingular because $H_{\F\F}\succ0$ by the standing assumption, and because $C_\F$ has
full row rank under the constraint qualification. This fixes the sign
convention used throughout: the offset $\alpha$ solves \eqref{eq:border} with $\lambda$
set to $0$ in its right-hand side, the slope $\delta$ solves it with the coefficient of
$\lambda$ as right-hand side, and a segment is read off as $\alpha+\lambda\delta$ with
each term's own sign already absorbed. Note what the bordered block absorbs: $G_{\Sset}$
enters only as additional rows of $C$, indistinguishable from the equality rows $A$, and
the box bounds are carried by the partition $(\F,\B)$ and the fixed $x_\B$ rather than
by $C$ at all.

Writing each quantity as its value at $\lambda_k$ plus $\sigma t$ times its rate gives
the four ratios. Write $\min^{+}$ for the least strictly positive member of a set, $+\infty$ if there is
none. Two of the four confine an affine quantity to an interval whose \emph{endpoints
are themselves affine}, and they take the same two-sided form; the other two are
one-sided. The four are
\begin{equation}
\label{eq:ratios}
\begin{aligned}
t_{\mathrm{P}_1} &= \min_{i\in\F}\ {\min}^{+}
   \Big\{\tfrac{\ub_i-x_i}{\sigma(\dot x_i-\dot\ub_i)},\
         \tfrac{x_i-\lb_i}{\sigma(\dot\lb_i-\dot x_i)}\Big\},\\[2pt]
t_{\mathrm{D}_1} &= \min_{i\in\B}\ {\min}^{+}
   \Big\{\tfrac{\ub^{\rho}_i-\rho_i}{\sigma(\dot\rho_i-\dot\ub^{\rho}_i)},\
         \tfrac{\rho_i-\lb^{\rho}_i}{\sigma(\dot\lb^{\rho}_i-\dot\rho_i)}\Big\},\\[2pt]
t_{\mathrm{P}_2} &= \min_{j\notin\Sset}\ {\min}^{+}
   \Big\{\tfrac{h_j-g_j\trans x}{\sigma(g_j\trans\dot x-\dot h_j)}\Big\},\\[2pt]
t_{\mathrm{D}_2} &= \min_{j\in\Sset}\ {\min}^{+}
   \Big\{\tfrac{-\zeta_j}{\sigma\dot\zeta_j}\Big\},
\end{aligned}
\end{equation}
and the step is
$t^\star=\min\{t_{\mathrm{P}_1},t_{\mathrm{P}_2},t_{\mathrm{D}_1},t_{\mathrm{D}_2}\}$.
Here $\dot x=\delta$ and $\dot\zeta$ come from the same bordered solve, $\dot\rho$ from
stationarity, and $\dot\lb,\dot\ub,\dot h$ are the data's own rates, zero when that
datum does not move. The
primal pair $\mathrm{P}_1,\mathrm{P}_2$ are the ratio tests of the simplex method and
the dual pair $\mathrm{D}_1,\mathrm{D}_2$ their counterparts; what the display makes
plain is that $\mathrm{P}_1$ and $\mathrm{D}_1$ are the \emph{same} test, one on a
primal quantity against the box and one on a dual quantity against the cone, and that
each reduces to a fixed threshold exactly when the corresponding rate vanishes.

\begin{remark}[$\mathrm{D}_1$ is the LARS entry step]
\label{rem:larsstep}
Specialise \eqref{eq:ratios} to the LASSO of Remark~\ref{rem:lassocheck}, travelling
towards smaller $\lambda$ ($\sigma=-1$). There $\rho_i=x_i\trans(y-X\beta)$ is the
correlation of an inactive predictor with the residual, the confining interval is
$[-\lambda,\lambda]$, and both sides move. Writing $a_i=x_i\trans X_\F\delta_\F$ for the
rate at which that correlation changes, the boundary is reached when
$|\rho_i-t\,a_i|=\lambda_k-t$, so
\[
t_{\mathrm{D}_1}=\min_{i\in\B}\ \min_{\pm}\
\Big\{\tfrac{\lambda_k\mp\rho_i}{1\mp a_i}\ :\ \text{positive}\Big\},
\]
which is the entry step of least angle regression in the form \citet{efron2004} write it,
$\min^{+}\{(\hat C-\hat c_j)/(A-a_j),\,(\hat C+\hat c_j)/(A+a_j)\}$, up to their
normalisation of the equiangular direction. The moving threshold is the whole of the
difference from the fixed-bound case: set the denominators' $\mp a_i$ against a stationary
$1$ and one is back at $\mathrm{P}_1$.
\end{remark}

\begin{center}
\fbox{\begin{minipage}{0.92\columnwidth}
\small
\textsc{Parametric active-set homotopy}\\[2pt]
\emph{Input:} $H$ as an operator, $q$, $d$, $\tau_0$, $\tau_1$, the constraint data, a
starting
$\lambda_0$ and an optimal active set $(\F,\B,\Sset)$ there.\\[3pt]
\textbf{repeat}
\begin{enumerate}\itemsep0pt\parskip0pt
\item Solve \eqref{eq:border} twice against one factorisation to get the segment
      $x_\F(\lambda)$ and the multipliers $\zeta(\lambda)$, $\rho(\lambda)$.
\item Form the step to each event in $\mathrm{P}_1,\mathrm{P}_2,\mathrm{D}_1,
      \mathrm{D}_2$; let $t^\star$ be the smallest, Bland's rule breaking ties.
\item Move to $\lambda_k+\sigma t^\star$, record the turning point, and apply the
      attaining event's update to $(\F,\B,\Sset)$.
\item Update the factorisation by the one index that changed.
\end{enumerate}
\textbf{until} no event is reachable.\\[3pt]
\emph{Output:} the breakpoints, between which the path is affine by
Lemma~\ref{lem:affine}.
\end{minipage}}
\end{center}

% Author contributions. IMS asks for this on a journal submission; a preprint
% is under no such obligation, so both preprint routes drop it. \PREPRINT is
% the same switch that drops the journal banner -- see the comment at the top.
% To carry it on arXiv as well, delete this \ifdefined line and its \fi.
\ifdefined\PREPRINT\else
\begin{acks}[Author contributions]
T.~Schmelzer conceived the unifying perspective and wrote the manuscript.
T.~Hastie contributed Section~\ref{sec:identity}.
\end{acks}
\fi

\begin{acks}[Acknowledgments]
\emph{In memory of Harry Markowitz (1927--2023).}\par
\smallskip
We thank Ryan~J.\ Tibshirani for helpful comments on an earlier draft. The
remaining errors are ours.\par
\smallskip
The roots of this work were planted during the first author's sabbatical at Stanford
University. He is most grateful to his host, Stephen Boyd, and to the convex
optimization group (\texttt{cvxgrp}) for the setting in which it took shape; and to
Philipp Schiele, who shares his affection for the Critical Line Algorithm.
\par
\smallskip
This note was drafted and revised with the assistance of Anthropic's Claude;
the authors are responsible for all content.
\end{acks}

% References are managed with BibTeX (refs/references.bib) and the IMS
% author-year style (imsart-nameyear.bst).
\bibliographystyle{imsart-nameyear}
\bibliography{refs/references}

@article{beale1959,
  author  = {Beale, E.~M.~L.},
  title   = {On quadratic programming},
  journal = {Naval Research Logistics Quarterly},
  year    = {1959},
  volume  = {6},
  number  = {3},
  pages   = {227--243},
}

@article{bemporad2002,
  author  = {Bemporad, A. and Morari, M. and Dua, V. and Pistikopoulos, E.~N.},
  title   = {The explicit linear quadratic regulator for constrained systems},
  journal = {Automatica},
  year    = {2002},
  volume  = {38},
  number  = {1},
  pages   = {3--20},
}

@article{brodie2009,
  author  = {Brodie, J. and Daubechies, I. and {De Mol}, C. and Giannone, D. and Loris, I.},
  title   = {Sparse and stable {Markowitz} portfolios},
  journal = {Proceedings of the National Academy of Sciences},
  year    = {2009},
  volume  = {106},
  number  = {30},
  pages   = {12267--12272},
}

@incollection{dantzig1951,
  author    = {Dantzig, G.~B.},
  title     = {Maximization of a linear function of variables subject to linear inequalities},
  editor    = {Koopmans, T.~C.},
  booktitle = {Activity Analysis of Production and Allocation},
  series    = {Cowles Commission Monograph No.~13},
  pages     = {339--347},
  publisher = {Wiley},
  address   = {New York},
  year      = {1951},
}

@article{efron2004,
  author  = {Efron, B. and Hastie, T. and Johnstone, I. and Tibshirani, R.},
  title   = {Least angle regression},
  journal = {The Annals of Statistics},
  year    = {2004},
  volume  = {32},
  number  = {2},
  pages   = {407--499},
}

@article{frankwolfe1956,
  author  = {Frank, M. and Wolfe, P.},
  title   = {An algorithm for quadratic programming},
  journal = {Naval Research Logistics Quarterly},
  year    = {1956},
  volume  = {3},
  number  = {1--2},
  pages   = {95--110},
}

@article{gaines2018,
  author  = {Gaines, B.~R. and Kim, J. and Zhou, H.},
  title   = {Algorithms for fitting the constrained lasso},
  journal = {Journal of Computational and Graphical Statistics},
  year    = {2018},
  volume  = {27},
  number  = {4},
  pages   = {861--871},
}

@article{gasssaaty1955,
  author  = {Gass, S.~I. and Saaty, T.~L.},
  title   = {The computational algorithm for the parametric objective function},
  journal = {Naval Research Logistics Quarterly},
  year    = {1955},
  volume  = {2},
  number  = {1--2},
  pages   = {39--45},
}

@article{hastie2004svm,
  author  = {Hastie, T. and Rosset, S. and Tibshirani, R. and Zhu, J.},
  title   = {The entire regularization path for the support vector machine},
  journal = {Journal of Machine Learning Research},
  year    = {2004},
  volume  = {5},
  pages   = {1391--1415},
}

@misc{karush1939,
  author    = {Karush, W.},
  title     = {Minima of functions of several variables with inequalities as side conditions},
  note      = {M.Sc. thesis, Department of Mathematics, University of Chicago},
  year      = {1939},
}

@inproceedings{kuhntucker1951,
  author    = {Kuhn, H.~W. and Tucker, A.~W.},
  title     = {Nonlinear programming},
  editor    = {Neyman, J.},
  booktitle = {Proceedings of the Second Berkeley Symposium on Mathematical Statistics and Probability},
  pages     = {481--492},
  publisher = {University of California Press},
  address   = {Berkeley},
  year      = {1951},
}

@article{lee2016,
  author  = {Lee, J.~D. and Sun, D.~L. and Sun, Y. and Taylor, J.~E.},
  title   = {Exact post-selection inference, with application to the lasso},
  journal = {The Annals of Statistics},
  year    = {2016},
  volume  = {44},
  number  = {3},
  pages   = {907--927},
}

@inproceedings{mairal2012,
  author    = {Mairal, J. and Yu, B.},
  title     = {Complexity analysis of the lasso regularization path},
  booktitle = {Proceedings of the 29th International Conference on Machine Learning (ICML)},
  year      = {2012},
}

@article{markowitz1952,
  author  = {Markowitz, H.~M.},
  title   = {Portfolio selection},
  journal = {The Journal of Finance},
  year    = {1952},
  volume  = {7},
  number  = {1},
  pages   = {77--91},
}

@article{markowitz1956,
  author  = {Markowitz, H.~M.},
  title   = {The optimization of a quadratic function subject to linear constraints},
  journal = {Naval Research Logistics Quarterly},
  year    = {1956},
  volume  = {3},
  number  = {1--2},
  pages   = {111--133},
}

@article{houthakker1953,
  author  = {Houthakker, H.~S.},
  title   = {La forme des courbes d'{E}ngel},
  journal = {Cahiers du S\'eminaire d'\'Econom\'etrie},
  year    = {1953},
  volume  = {2},
  pages   = {59--66},
}

@book{dorfman1951,
  author    = {Dorfman, R.},
  title     = {Application of Linear Programming to the Theory of the Firm},
  publisher = {University of California Press},
  address   = {Berkeley},
  year      = {1951},
}

@techreport{markowitz1955,
  author      = {Markowitz, H.~M.},
  title       = {The optimization of quadratic functions subject to linear constraints},
  institution = {The RAND Corporation},
  number      = {RM-1438},
  year        = {1955},
  note        = {Research Memorandum, 21 February 1955; the working paper behind
                 \citet{markowitz1956}},
}

@book{markowitz1959,
  author    = {Markowitz, H.~M.},
  title     = {Portfolio Selection: Efficient Diversification of Investments},
  publisher = {Wiley},
  address   = {New York},
  year      = {1959},
}

@incollection{markowitz2021,
  author    = {Markowitz, H.~M. and Starer, D. and Fram, H. and Gerber, S.},
  title     = {Avoiding the downside: a practical review of the critical line algorithm for mean--semivariance portfolio optimization},
  booktitle = {Handbook of Applied Investment Research},
  chapter   = {17},
  publisher = {World Scientific},
  year      = {2021},
  doi       = {10.1142/9789811222634\_0017},
  url       = {https://doi.org/10.1142/9789811222634_0017},
}

@incollection{niedermayer2010,
  author    = {Niedermayer, A. and Niedermayer, D.},
  title     = {Applying {Markowitz's} critical line algorithm},
  booktitle = {Handbook of Portfolio Construction},
  pages     = {383--400},
  publisher = {Springer},
  year      = {2010},
}

@article{osborne2000,
  author  = {Osborne, M.~R. and Presnell, B. and Turlach, B.~A.},
  title   = {A new approach to variable selection in least squares problems},
  journal = {IMA Journal of Numerical Analysis},
  year    = {2000},
  volume  = {20},
  number  = {3},
  pages   = {389--403},
}

@article{perold1984,
  author  = {Perold, A.~F.},
  title   = {Large-scale portfolio optimization},
  journal = {Management Science},
  year    = {1984},
  volume  = {30},
  number  = {10},
  pages   = {1143--1160},
}

@article{roll2008,
  author  = {Roll, J.},
  title   = {Piecewise linear solution paths with application to direct weight optimization},
  journal = {Automatica},
  year    = {2008},
  volume  = {44},
  number  = {11},
  pages   = {2745--2753},
}

@article{rosset2007,
  author  = {Rosset, S. and Zhu, J.},
  title   = {Piecewise linear regularized solution paths},
  journal = {The Annals of Statistics},
  year    = {2007},
  volume  = {35},
  number  = {3},
  pages   = {1012--1030},
}

@article{ritter1962,
  author  = {Ritter, K.},
  title   = {Ein {V}erfahren zur {L}\"osung parameterabh\"angiger, nichtlinearer
             {M}aximum-{P}robleme},
  journal = {Unternehmensforschung},
  year    = {1962},
  volume  = {6},
  pages   = {149--166},
  doi     = {10.1007/BF01920852},
  note    = {Translated as Ritter (1967)},
}

@article{ritter1967,
  author  = {Ritter, K.},
  title   = {A method for solving nonlinear maximum-problems depending on parameters},
  journal = {Naval Research Logistics Quarterly},
  year    = {1967},
  volume  = {14},
  number  = {2},
  pages   = {147--162},
  note    = {English translation, by M. Meyer, of Ritter (1962)},
}

@misc{cvxcla,
  author       = {Schmelzer, T.},
  title        = {\texttt{cvxcla}: the Critical Line Algorithm with a covariance-operator interface},
  year         = {2026},
  publisher    = {Zenodo},
  doi          = {10.5281/zenodo.22209209},
  note         = {Version 2.0.0, open-source software (MIT licence).
                  Source at \url{https://github.com/cvxgrp/cvxcla}},
}

@article{sharpe1963,
  author  = {Sharpe, W.~F.},
  title   = {A simplified model for portfolio analysis},
  journal = {Management Science},
  year    = {1963},
  volume  = {9},
  number  = {2},
  pages   = {277--293},
}

@article{tibshirani1996,
  author  = {Tibshirani, R.},
  title   = {Regression shrinkage and selection via the lasso},
  journal = {Journal of the Royal Statistical Society, Series B},
  year    = {1996},
  volume  = {58},
  number  = {1},
  pages   = {267--288},
}

@article{tibshirani2013,
  author  = {Tibshirani, R.~J.},
  title   = {The lasso problem and uniqueness},
  journal = {Electronic Journal of Statistics},
  year    = {2013},
  volume  = {7},
  pages   = {1456--1490},
}

@article{tibshirani2011,
  author  = {Tibshirani, R.~J. and Taylor, J.},
  title   = {The solution path of the generalized lasso},
  journal = {The Annals of Statistics},
  year    = {2011},
  volume  = {39},
  number  = {3},
  pages   = {1335--1371},
}

@article{tibshirani2012df,
  author  = {Tibshirani, R.~J. and Taylor, J.},
  title   = {Degrees of freedom in lasso problems},
  journal = {The Annals of Statistics},
  year    = {2012},
  volume  = {40},
  number  = {2},
  pages   = {1198--1232},
}

@article{tibshirani2016psi,
  author  = {Tibshirani, R.~J. and Taylor, J. and Lockhart, R. and Tibshirani, R.},
  title   = {Exact post-selection inference for sequential regression procedures},
  journal = {Journal of the American Statistical Association},
  year    = {2016},
  volume  = {111},
  number  = {514},
  pages   = {600--620},
}

@article{tondel2003,
  author  = {T{\o}ndel, P. and Johansen, T.~A. and Bemporad, A.},
  title   = {An algorithm for multi-parametric quadratic programming and explicit {MPC} solutions},
  journal = {Automatica},
  year    = {2003},
  volume  = {39},
  number  = {3},
  pages   = {489--497},
}

@article{wolfe1959,
  author  = {Wolfe, P.},
  title   = {The simplex method for quadratic programming},
  journal = {Econometrica},
  year    = {1959},
  volume  = {27},
  number  = {3},
  pages   = {382--398},
}

@article{zou2007df,
  author  = {Zou, H. and Hastie, T. and Tibshirani, R.},
  title   = {On the ``degrees of freedom'' of the lasso},
  journal = {The Annals of Statistics},
  year    = {2007},
  volume  = {35},
  number  = {5},
  pages   = {2173--2192},
}

@article{fan2012,
  author  = {Fan, J. and Zhang, J. and Yu, K.},
  title   = {Vast portfolio selection with gross-exposure constraints},
  journal = {Journal of the American Statistical Association},
  year    = {2012},
  volume  = {107},
  number  = {498},
  pages   = {592--606},
}

@article{kato2009,
  author  = {Kato, K.},
  title   = {On the degrees of freedom in shrinkage estimation},
  journal = {Journal of Multivariate Analysis},
  year    = {2009},
  volume  = {100},
  number  = {7},
  pages   = {1338--1352},
}

@incollection{best1996,
  author    = {Best, M.~J.},
  title     = {An algorithm for the solution of the parametric quadratic programming problem},
  booktitle = {Applied Mathematics and Parallel Computing: Festschrift for Klaus Ritter},
  editor    = {Fischer, H. and Riedm\"uller, B. and Sch\"affler, S.},
  publisher = {Physica-Verlag},
  address   = {Heidelberg},
  year      = {1996},
  pages     = {57--76},
}

@article{gartner2012,
  author  = {G\"artner, B. and Jaggi, M. and Maria, C.},
  title   = {An exponential lower bound on the complexity of regularization paths},
  journal = {Journal of Computational Geometry},
  year    = {2012},
  volume  = {3},
  number  = {1},
  pages   = {168--195},
}

@inproceedings{gartner2009,
  author    = {G\"artner, B. and Giesen, J. and Jaggi, M. and Welsch, T.},
  title     = {A combinatorial algorithm to compute regularization paths},
  booktitle = {arXiv:0903.4856},
  year      = {2009},
  note      = {A generic parametric-quadratic-programming path method, applied to the
               LASSO, support vector machines, Markowitz portfolio selection and
               conjoint analysis},
}

@incollection{cottle2010,
  author    = {Cottle, R. W. and Infanger, G.},
  title     = {Harry {M}arkowitz and the Early History of Quadratic Programming},
  booktitle = {Handbook of Portfolio Construction: Contemporary Applications of
               {M}arkowitz Techniques},
  editor    = {Guerard, J. B.},
  publisher = {Springer},
  address   = {Boston, MA},
  year      = {2010},
  pages     = {179--211},
}

\end{document}